\newtheorem{theorem}{Theorem}
\newtheorem{lemma}[theorem]{Lemma}
\theoremstyle{definition}
\newtheorem{definition}[theorem]{Definition}
\newtheorem{claim}[theorem]{Claim}
\newtheorem{remark}[theorem]{Remark}
\newtheorem{fact}[theorem]{Fact}
\newcommand{\diverge}{\to\infty}
\newcommand{\reals}{{\mathbb{R}}}
\definecolor{ForestGreen}{rgb}{0, 0.5, 0} 
\newcommand{\red}{\color{red}}
\newcommand{\blue}{\color{blue}}
\newcommand{\nb}[1]{{\sf\blue[#1]}}
\newcommand{\nbr}[1]{{\sf\red[#1]}}
\newcommand{\prob}[1]{ \mathbb{P}\left\{ #1 \right\} }
\newcommand{\iid}{i.i.d.\xspace}
\newcommand{\pth}[1]{\left( #1 \right)}
\newcommand{\qth}[1]{\left[ #1 \right]}
\newcommand{\sth}[1]{\left\{ #1 \right\}}
\newcommand{\abth}[1]{\left | #1 \right |}
\newcommand{\indc}[1]{{\mathbf{1}_{\left\{{#1}\right\}}}}
\newcommand{\calA}{{\mathcal{A}}}
\newcommand{\calE}{{\mathcal{E}}}
\newcommand{\calN}{{\mathcal{N}}}
\newcommand{\calP}{{\mathcal{P}}}
\newcommand{\calR}{{\mathcal{R}}}
\newcommand{\calW}{{\mathcal{W}}}
\newcommand{\weight}{\mathsf{w}}
\renewcommand{\hat}{\widehat}
\newcommand{\pre}{\mathsf{PRE}}
\newcommand{\pos}{\mathsf{POST}}
\title{Spike-Based Winner-Take-All Computation: \\
Fundamental Limits and Order-Optimal Circuits
} 
\author{
Lili Su \\
Computer Science \& Artificial Intelligence Laboratory\\
Massachusetts Institute of Technology\\
{lilisu@mit.edu}
\and 
Chia-Jung Chang\\
Brain and Cognitive Sciences\\
Massachusetts Institute of Technology\\
{chiajung@mit.edu}
\and 
Nancy Lynch\\
Computer Science \& Artificial Intelligence Laboratory\\
Massachusetts Institute of Technology\\
{lynch@csail.mit.edu}
}
\date{\today}
\begin{document}

\maketitle

\begin{abstract}
Winner-Take-All (WTA) refers to the neural operation that selects a (typically small) group of 
neurons from a large neuron pool. It is conjectured to underlie many of the brain's fundamental computational abilities. However, not much is known about the robustness of a spike-based WTA network to the inherent randomness of the input spike trains. In this work, we consider a spike-based $k$--WTA model wherein $n$ randomly generated input spike trains compete with each other based on their underlying statistics, and $k$ winners are supposed to be selected. 
We slot the time evenly with each time slot of length  
$1\, ms$, and model the $n$ input spike trains as $n$ independent Bernoulli processes. 
The Bernoulli process is a good approximation of the popular Poisson process but is more biologically relevant as it takes the refractory periods into account. 
Due to the randomness in the input spike trains, no circuits 
can guarantee to successfully select the correct winners in finite time. We focus on analytically characterizing the minimal amount of time needed so that a target minimax decision accuracy (success probability) can be reached. 

%

We first derive an information-theoretic lower bound on the decision time. 
We show that to have a (minimax) decision error $\le \delta$ (where $\delta \in (0,1)$), the computation time of any WTA circuit is at least  
\[
((1-\delta) \log(k(n -k)+1) -1)T_{\calR},
\]
where 
$T_{\calR}$ is a difficulty parameter of a WTA task that is independent of $\delta$, $n$, and $k$. 
We then design a simple WTA circuit whose decision time is 
\[
O\pth{\pth{\log \pth{\frac{1}{\delta}}+\log k(n-k)}T_{\calR}}.
\]
It turns out that for any fixed $\delta \in (0,1)$, this decision time is order-optimal in terms of its scaling in $n$, $k$, and $T_{\calR}$. 
%

\end{abstract}

\section{Introduction}
Humans and animals can form a stable perception and make robust judgments under 
ambiguous conditions. For example, we can easily recognize a dog in a picture regardless of its posture, hair color, and whether it stands in the shadow or 
is occluded by other objects. One fundamental feature of brain computation is its robustness to the randomness 
introduced at different stages, such as sensory representations \cite{kinoshita2001neural, hubel1959receptive}, feature integration \cite{kourtzi2003integration, majaj2007motion}, decision formation \cite{platt1999neural, shadlen2001neural}, and motor planning \cite{harris1998signal, li2015motor}. It has been shown that neurons encode information in a stochastic manner in the brain \cite{baddeley1997responses, kara2000low, maimon2009beyond, ferrari2018simple}; even when the exact same sensory stimulus is presented or when the same kinematics are achieved, no  deterministic patterns in the spike trains exist. Facing environmental ambiguity, humans and animals adaptively refine 
their behaviors by incorporating prior knowledge with their current sensory measurements \cite{faisal2008noise, knill2004bayesian, stocker2006noise, ernst2002humans, kording2004bayesian}. Nevertheless, it remains relatively unclear how neurons carry out robust computation facing ambiguity. 
Sparse coding is a common strategy in brain computation; to encode a task-relevant variable, often only a small group of neurons from a large neuron pool are activated \cite{olshausen2004sparse, perez2002oscillations, hromadka2008sparse, quiroga2008sparse, karlsson2008network, redgrave1999basal}. Understanding the underlying neuron selection mechanism is highly challenging. 

Winner-Take-All (WTA) is a hypothesized 
mechanism  to select proper neurons from a competitive network of neurons, 
and is conjectured to be a fundamental primitive of 
cognitive functions such as attention and object recognition \cite{riesenhuber1999hierarchical, itti1998model, yuille1998handbook, maass2000computational}. Among these studies, it is commonly assumed that neurons transmit information with a continuous variable such as the firing rate. This assumption, however, ignores how temporal coding may additionally contribute to cortical computations. For example, some neurons in the auditory cortex will respond to auditory events with bursts at a fixed latency \cite{gerstner1996neuronal, nelken2004processing}. This phase-locking property is also observed in the hippocampus as well as the prefrontal cortex \cite{siapas2005prefrontal, hahn2006phase, buzsaki1995temporal}. Another feature that has been neglected in a rate-based model is the inherent noise in the inputs. Although some studies used additive Gaussian noise \cite{kriener2017fast, li2013class, lee1999attention, rougier2006emergence} to account for input randomness, such WTA circuits are very sensitive to noise and could not successfully select even a single winner unless extra robustness strategy such as an additional nonlinearity is introduced 
into the dynamics \cite{kriener2017fast}. Last but not least, neurons have a refractory period, which prevents spikes from back propagating in axons \cite{berry1998refractoriness}, and such a feature is usually neglected in the rate-based models. 
In contrast, a spike-based model may capture these neglected features. Nevertheless, how WTA computation can be implemented and its algorithmic characterization remains relatively under-explored.

In this paper, we study a spike-based $k$-WTA model wherein $n$ randomly generated input spike trains are competing with each other with their underlying statistics, and the true winners are the $k$ input spike trains whose underlying statistics are higher than others. 
More precisely, we slot the time evenly with each time slot of length $1\, ms$. We assume that these $n$ input spike trains are generated by $n$ independent Bernoulli processes with different rates. 
An abstract example 
is depicted in Figure \ref{fig:WTA abstract}. 
We use Bernoulli processes to capture the randomness in the input spike trains 
rather than using the popular Poisson processes because a Bernoulli process can be viewed as the time-slotted version of a refractory-period-modified Poisson process; it is well-known that due to the existence of refractory periods, a neuron cannot spike twice within $1\, ms$. 

We focus on analytically characterizing the minimal  amount  of  time  needed  so  that  a  target  minimax  decision  accuracy  (success probability) can be reached.
We first derive a lower bound on the decision time for a given decision accuracy. 
We show that no WTA circuits can have a computation time strictly less than 
\begin{align}
    \label{eq: lower bound}
((1-\delta) \log(k(n -k)+1) -1)T_{\calR},
\end{align}
where $T_{\calR}$ is a parameter defining the difficulty to distinguish between two spike trains with different statistics in a WTA task, $n$ is the number of input spike trains, $k$ is the number of winners, and $\delta$ is the given target decision accuracy. In many practical settings we care about the sparse coding region where $k\ll n$. 
Our lower-bound is obtained by an information-theoretic argument, and holds for all WTA circuits without restricting their circuit architectures and their adopted activation functions. Throughout this paper, we are interested in the decision time's scaling in $n$, $k$, and $T_{\calR}$, while treating $\delta \in (0,1)$ as a fixed small constant. 
Not surprisingly, the above lower bound grows with the network size $n$ when other parameters are fixed. This is because the larger $n$, the noisier the WTA competition. 
Similarly, when $n$ and $k$ are fixed, the easier to distinguish two spike trains with different statistics (i.e., the smaller $T_{\calR}$), the shorter the necessary decision time is. 
%
\begin{figure}
\centering
\includegraphics[width=14cm]{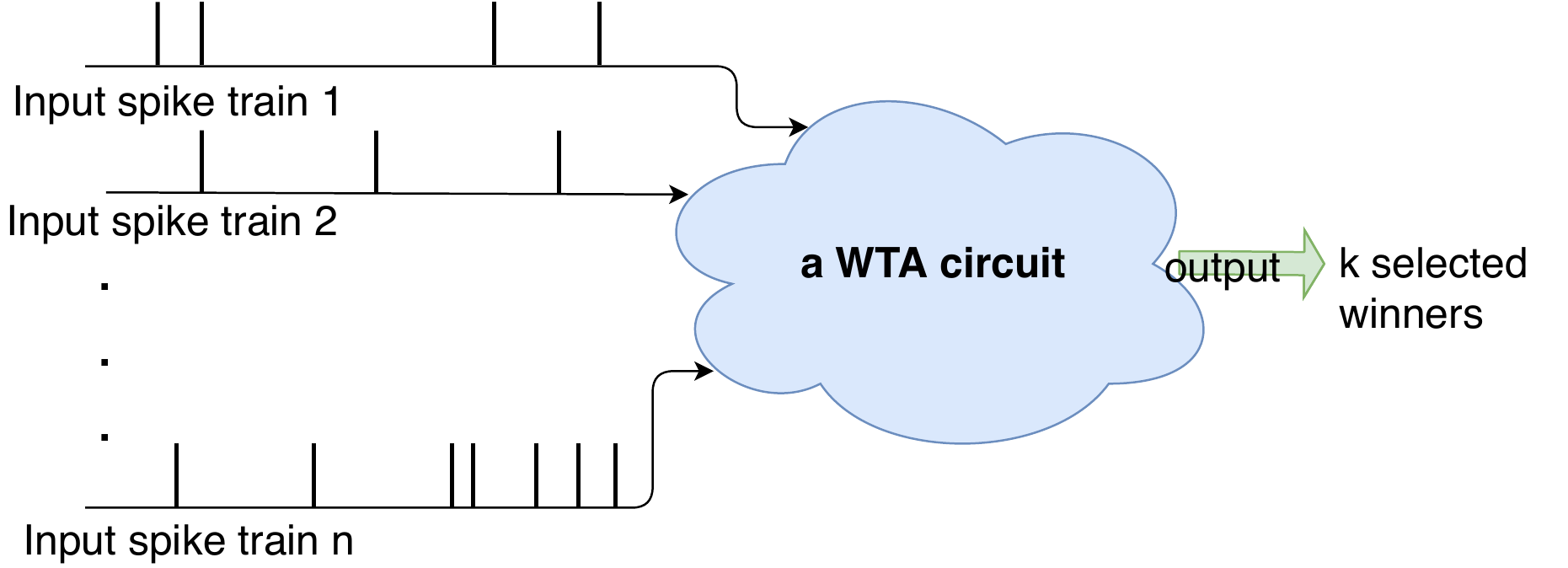}
  \caption{In this figure, $n$ randomly generated input spike trains are fed to a WTA circuit as the circuit input. 
  Clearly, no deterministic patterns can be read off from these spike trains. Here, we do not specify the output of a WTA circuit because 
the detailed specifications of the circuits' outputs might vary with the corresponding applications. 
%
  }
\label{fig:WTA abstract}
\end{figure}

We construct a simple circuit whose decision time is 
\[
O\pth{\pth{\log \pth{\frac{1}{\delta}}+\log k(n-k)}T_{\calR}}.
\]
It turns out that for any fixed $\delta \in (0,1)$, this decision time is order-optimal in terms of its scaling in $n$, $k$, and $T_{\calR}$, i.e., its decision time matches the lower-bound in \eqref{eq: lower bound} up to a constant multiplicative factor. 
In our circuit, each output neuron is a thresholded accumulator unit whose 
threshold is determined by $\delta$, $k$, $n$, and $T_{\calR}$, and the circuit's output is the first group of $k$ output neurons that spike in the same time.  
The typical dynamics under our circuit are: the number of output neurons that spike simultaneously (i.e., spike at the same time) is monotonically increasing until exactly $k$ output neurons spike simultaneously. The simultaneous spikes of these $k$ output neurons cause strong inhibition of other output neurons; in particular, no other output neuron can spike within a sufficiently long period $\Omega\pth{\pth{\log \pth{\frac{1}{\delta}}+\log k(n-k)}T_{\calR}}$. 


In addition, our results 
also give a set of testable hypotheses on neural recordings and humans'/animals' behaviors in decision-making. 
For instance, given the number of input spike trains and the number of true winners, 
our results can provide an 
estimate of the minimum decision time needed, which can provide some insights on the efficiency of a WTA circuit in terms of decision time. 
As another example, when two animals are involved in the same experiment, if both animals reach the same accuracy in discriminating two objects, does the animal that decides faster have more heterogeneous distributions of input spiking activities, i.e., smaller $T_{\calR}$? 
Our results provide partial answers to this question. 


\section{Computational Model: Spiking Neuron Networks}

In this section, we provide a general description of the computation model used. There is much freedom in choosing the detailed specification of the model. In particular, in Section \ref{sec: optimal SNNs} we provide a circuit construction (for solving the $k$--WTA competition) under this computation model.

 \subsection{Network Structure}
\label{sec: SNN}
 

%
A {\em spiking neuron network} (SNN) $\calN = \pth{U, E}$ consists of a collection of neurons $U$ that are connected through synapses $E$.   
	We assume that a SNN can be conceptually partitioned into three non-overlapping layers: {\em input layer} $N_{in}$,  {\em hidden layer} $N_{h}$, and {\em output layer} $N_{out}$; the neurons in each of these layers are referred to as {\em input neurons},  {\em hidden neurons}, and {\em output neurons}, respectively. 
%
The synapses $E$ are essentially {\em directed} edges, i.e, $E : = \sth{(\nu,\nu^{\prime}): ~ \nu, \nu^{\prime}\in U}$.
For each $\nu\in U$, define $\pre_{\nu} := \sth{\nu^{\prime}: (\nu^{\prime},\nu)\in E}$ and $\pos_{\nu} := \sth{\nu^{\prime}: (\nu,\nu^{\prime})\in E}$. Intuitively, $\pre_\nu$ is the collection of neurons that can directly influence neuron $\nu$; similarly,  $\pos_\nu$ is the collection of neurons that can be directly influenced by neuron $\nu$. 
\footnote{In the languages of computational neuroscience, the incoming neighbors and outgoing neighbors are often referred to as pre-synaptic units and post-synaptic units. } We assume that the input neurons cannot be influenced by other neurons in the network, i.e., $\pre_\nu=\varnothing$ for all $\nu\in N_{in}$. 
%
%
Each edge $(\nu,\nu^{\prime})$ in $E$ has a {\em weight}, denoted by $\weight(\nu,\nu^{\prime})$.  The strength of the interaction between neuron $\nu$ and neuron $\nu^{\prime}$ is captured as $\abth{\weight(\nu,\nu^{\prime})}$. 
The sign of $\weight(\nu,\nu^{\prime})$ indicates whether neuron $\nu$ excites or inhibits neuron $\nu^{\prime}$: 
In particular, if neuron $\nu$ excites neuron $\nu^{\prime}$, then $\weight(\nu,\nu^{\prime})>0$; if 
neuron $\nu$ inhibits neuron $\nu^{\prime}$, then $\weight(\nu,\nu^{\prime})<0$. The set $E$ might contain self-loops  
with $\weight(\nu,\nu)$ capturing the self-excitatory/self-inhibitory effects.
An example of SNNs can be found in Figure \ref{fig: three layers}. 

\begin{figure}
\centering
\includegraphics[width=10cm]{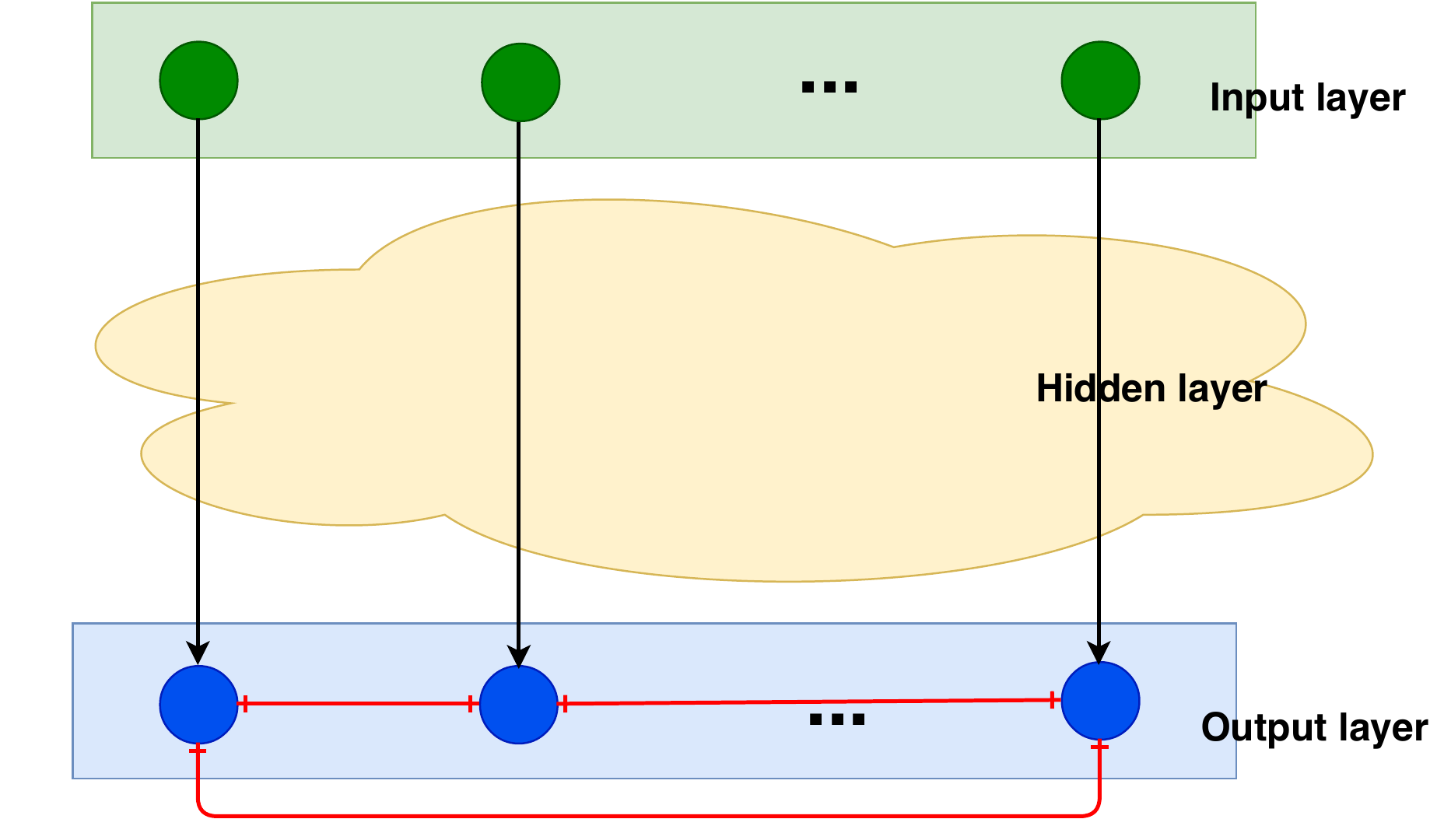}
  \caption{A SNN consists of three layers: the input layer, the output layer, and the hidden layer. The hidden neurons might connect to both the input neurons and the output neurons to assist the computation of the neuron network. Neurons are connected through synapses. 
  WTA circuits is a family of SNNs in which the number of output neurons equals the number of the input neurons. 
  }
\label{fig: three layers}
\end{figure}

\paragraph{Generic network structure for WTA circuits}
The family of WTA circuits under consideration is rather generic. We only assume that $\abth{N_{in}} = \abth{N_{out}} =n$ the numbers of the input neurons and of the output neurons are equal. 
%
For ease of exposition, denote 
\begin{align*}
N_{in} = \sth{u_1, \cdots, u_n}, ~\text{and}  ~ N_{out} = \sth{v_1, \cdots, v_n}. 
\end{align*} 
The hidden neuron subset $N_{h}$ can be arbitrary. 
%
%
The output neurons and the hidden neurons may be connected to each other in an arbitrary manner. 

\subsection{Network State}
In a SNN, the communication among neurons is abstracted as spikes. 
We assume each neuron $\nu$ has two local variables:  {\em spiking state} variable $S(\nu)$ and {\em memory state} variable $M(\nu)$.
Nevertheless, for input neurons, we only consider their spiking states, assuming that their memory states are not influenced by the dynamics of the spiking neuron network under consideration. 
We slot the time evenly with each time slot of length $1 \, ms$. Let $t=1, 2, \cdots $ be the indices of the time slots. Henceforth, by saying time $t$, we mean the time interval $[t-1, t) \, ms$.    
%
For $t\ge 1$, let $S_t(\nu) \in \sth{0, 1}$ be the spiking state of neuron $\nu$ at time $t$ indicating whether neuron $\nu$ spikes at time $t$ or not.  By convention, $S_0(\nu):=0$. 
For a non-input neuron $\nu$ and 
for $t\ge 1$, 
let $M_t(\nu)$ be the memory state of neuron $\nu$ at time $t$ summarizing the cumulative influence caused by the spikes of the neurons in $\pre_i$ during the most recent $m$ times, i.e., times $t-1, t-2, \cdots, t-m$. 
Concretely, let $V_{t}(\nu)$ be the charge of (non-input) neuron $\nu$ at time $t$ (for $t\ge 1$) defined as  
\begin{align*}
    V_{t}(\nu) : = \sum_{\nu^{\prime}\in \pre_\nu} w(\nu^{\prime},\nu) S_{t}(\nu^{\prime}). 
\end{align*}
Clearly, $V_{0}(\nu)=0$. Let $\bm{V}_{t}^\nu$ be the sequence of length $m$ such that 
\begin{align*}
    \bm{V}_{t}^\nu := \qth{V_{t}(\nu), \cdots, V_{t-m+1}(\nu)},
\end{align*}
and let $\bm{S}_{t}(\nu)$ be the sequence of length $m$ such that
\begin{align*}
\bm{S}_{t}^\nu := \qth{S_{t}(\nu), \cdots, S_{t-m+1}(\nu)}. 
\end{align*}
By convention, when $0\le t \le m$, let 
\begin{align*}
\bm{V}_{t}^\nu := \qth{V_{t}(\nu), \cdots, V_{0}(\nu), 0, \cdots, 0}
\end{align*} 
and 
\begin{align*}
\bm{S}_{t}^\nu := \qth{S_{t}(\nu), \cdots, S_{0}(\nu), 0, \cdots, 0}.  
\end{align*}
%
For $t\ge 1$, define the memory variable $M_{t}(\nu)$ as a pair of vectors $\bm{S}_{t-1}^\nu$ and $\bm{V}_{t-1}^\nu$, i.e., 
\[
M_{t}(\nu) := \pth{\bm{S}_{t-1}^\nu, \bm{V}_{t-1}^\nu}.
\]
By convention, let $M_0(\nu) :=\pth{\bm{0}, \bm{0}}$, where $\bm{0}$ is the length $m$ zero vector. 

At time $t+1$, the memory variable $M_{t+1}(\nu)$ is updated by shifting the two sequences 
forwards by one time unit -- fetching in $S_{t}(\nu)$ and $V_{t}(\nu)$, respectively, and removing $S_{t-m}(\nu)$ and $V_{t-m}(\nu)$, respectively. 
%
%
%
%
The memory state $M_t(\nu)$ is known to neuron $\nu$ only, and it can influence the probability of generating a spike at time $t$ through an activation function $\phi_\nu$, i.e.,
\begin{align}
    \label{eq: activation}
S_{t}(\nu) = \phi_\nu\pth{M_{t}(\nu)}, \forall ~ t\ge 0.  
\end{align}
Notably, $\phi_\nu$ might be a random function. 
%
%

In most neurons, the synaptic plasticity time window is about 80 -120 msec, but could also vary across brain regions, and vary across different time scales under different behavioral contexts. In a sense, the synaptic plasticity time window is closely related to $m$. 
As can be seen in Section \ref{sec: optimal SNNs}, our order-optimal WTA circuit construction requires $m$ 
to be sufficiently high. Nevertheless, this does not exclude the application of our WTA circuit to the contexts where $m$ is small. This is because the memory variable can be implemented by a chain of hidden neurons near neuron $\nu$. 
The detailed implementation of the local memory does not affect the order optimality of our WTA circuit. 

\section{Minimax Decision Accuracy/Success Probability}
\label{sec: WTA problem}

\subsection{Random Input Spike Trains} 
\label{sec: input stochasticity}
%

We study the $k$--WTA model, wherein $n$ randomly generated input spike trains are competing with each other, and, as a result of this competition, $k$ out of them are selected to be the winners. In contrast, most existing works \cite{verzi2018computing,maass1997networks,lynch2016computational} assume deterministic input spike trains. 

Recall that time is slotted into intervals of length $1\, ms$. 
%
We assume that the $n$ input spike trains are generated from 
$n$ independent Bernoulli processes with unknown parameters $p_1, \cdots, p_n$, respectively. 
We refer to $\bm{p} = \qth{p_1, \cdots, p_n}$ as a {\em rate assignment} of the WTA competition. 
For example, suppose there are 2 input spike trains with rates $0.6$ and $0.8$, respectively, i.e., $n=2$ and $\bm{p}=\qth{0.6, 0.8}$. 
In each time, with probability 0.6 the first input spike train has a spike independently from whether the second input spike train has a spike or not; similarly for the second input spike train.

Note that in the most general scenario the spikes of the input neurons might be correlated; see Section \ref{sec: discussion} for detailed comments. 
We would like to explore the more general input spikes in our future work. 



\subsection{Minimax Performance Metric} 
\label{sec: minimax WTA}

We adopt the minimax framework \cite{YHWu2017lecture} (in which the circuit designer and nature play games against each other) 
to evaluate the performance (decision accuracy versus decision time) of a WTA circuit. 

Let $\calR \subseteq [c, C]$ be an arbitrary but finite set of rates where $c$ and $C$ are two absolute constants such that $0<c<C<1$. 
A rate assignment $\bm{p}$ is chosen by nature from $\calR^n$ for which there exists a subset of $[n] := \sth{1, \cdots, n}$, denoted by $\calW (\bm{p})$, such that 
\begin{align}
\label{eq: true winners}
\abth{\calW(\bm{p})} = k, ~~ \text{and} ~ p_i > p_j ~~ \forall \, i \in \calW(\bm{p}), j\notin \calW(\bm{p})
\end{align}
-- recall that $\abth{\cdot}$ is the cardinality of a set. 
We refer to set $\calW(\bm{p})$ as the true winners with respect to the rate assignment $\bm{p}$. For example, suppose $n=5$, $k=2$, and 
\[
\bm{p} ~ = ~ \qth{p_1=0.2, ~ p_2=0.1, ~ p_3=0.2,~ p_4=0.8,~ p_5=0.85}. 
\]
Here the true winners are $4$ and $5$, i.e., $\calW(\bm{p}) = \sth{4,5}$.   
%
In this paper, we consider the following collection of rate assignments, denoted by $\calA\calR$:  
\begin{align}
\label{def: admissible rates}
\calA\calR ~ := ~ \sth{\bm{p}: \bm{p}\in \calR^n, ~ \&~ \exists \calW(\bm{p})\subseteq [n] ~ s.t. ~\abth{\calW(\bm{p})} = k, \text{and} ~ p_i > p_j ~ \forall \, i \in \calW(\bm{p}), j\notin \calW(\bm{p})}. 
\end{align}
For each of reference, we refer to an element in $\calA\calR$ as an admissible rate assignment. 
%
%
Recall that the input of a WTA circuit is a collection of $n$ independent spike trains. For a given rate assignment $\bm{p}$, let $\sth{S_t(u_i)}_{t=1}^{T}$ denote the spike train of length $T$ at input neuron $u_i$.  
%
The circuit designer wants to design a WTA circuit that outputs 
a good guess/estimate $\hat{\bm{win}}$ of $\calW(\bm{p})$ for any choice of rate assignment $\bm{p}$ in $\calA\calR$. 
Note that conditioning on 
\[
\bm{S} := \qth{\sth{S_t(u_1)}_{t=1}^{T}, \cdots, \sth{S_t(u_n)}_{t=1}^{T}},
\]
the estimate $\hat{\bm{win}}$ is independent of 
$\bm{p}$. 
%
%
Here $\bm{S}$ is used with a little abuse of notation as this notation hides its connection with $T$ and the rate parameter $\bm{p}$. \footnote{A more rigorous notation should be $\bm{S}(T, \bm{p}) := \qth{\sth{S_t(u_1)}_{t=1}^{T}, \cdots, \sth{S_t(u_n)}_{t=1}^{T}}$. We use $\bm{S}$ for $\bm{S}(T, \bm{p})$ for ease of exposition.}
Later, we use the same notation to denote the $n$ spike trains with random rate assignment, i.e., where $\bm{p}$ is randomly generated. Nevertheless, this abuse of notation significantly simplifies the exposition without sacrificing clarity. In particular, we will specify the underlying rate assignment when it is not clear from the context. 

Under minimax framework, 
we are interested in the minimax error probability \footnote{In the following expression, the $\min $ should really be an $\inf$, but we abuse notation here for ease of exposition. In addition, the $\max$ really is a $\max$ as the set $\calR$ under consideration is of finite size. }
\begin{align}
\label{eq: minimax}
\min_{\hat{\bm{win}}} \max_{\bm{p}\in \calA\calR} \prob{\hat{\bm{win}}\pth{\bm{S}} \not= \calW(\bm{p})}. 
\end{align}
For a given deterministic WTA circuit $\hat{\bm{win}}$ (i.e., the activation functions used are deterministic), the probability in $ \prob{\hat{\bm{win}}\pth{\bm{S}} \not= \calW(\bm{p})}$  is taken w.r.t.\ the randomness in the stochastic spikes of each input neuron; for a randomized 
WTA circuit $\hat{\bm{win}}$ (i.e., the activation functions are stochastic), in addition to the aforementioned source of randomness, the probability in $ \prob{\hat{\bm{win}}\pth{\bm{S}} \not= \calW(\bm{p})}$  is also taken w.r.t.\ the randomness in the activation functions.  
In \eqref{eq: minimax}, the performance metric of a WTA circuit is the worst-case error probability
\[
\max_{\bm{p}\in \calA\calR} \prob{\hat{\bm{win}}\pth{\bm{S}} \not= \calW(\bm{p})}. 
\]
Essentially, the statistical inference problem can be viewed as a game between the circuit designer and nature. 

\section{Information-Theoretic Lower Bound on Decision Time} 
\label{sec: it lower bound}
In this section, we provide a lower bound on the decision time for a given decision accuracy. The lower bounds derived in this section hold universally for all possible network structures (including the hidden layer), synapse weights, and the activation functions. 

One observation is that the decision time is naturally lower bounded by the sample complexity, 
which is closely related to the Kullback-Leibler (KL) divergence\footnote{The Kullback-Leibler (KL) divergence gauges the {\bf dissimilarity} between two distributions.} between two Bernoulli distributions. 
The KL divergence between Bernoulli random variables with parameters $r$ and $r^{\prime}$, respectively, 
is defined as 
\begin{align}
\label{eq: kl bernoulli}
d(r \parallel r^{\prime}) ~ := ~ r \log \pth{\frac{r}{r^{\prime}}} + (1-r) \log \pth{\frac{1-r}{1-r^{\prime}}},
\end{align} 
where, by convention, $ \log \frac{0}{0} :=0$. Notably, $d(\cdot \parallel \cdot)$ is not symmetric in $r$ and $r^{\prime}$. In addition, when $r\not=0$ and $r^{\prime} =0$ or $1$, $d(r \parallel r^{\prime}) = \infty$.  
Recall that set $\calR$ is an arbitrary but finite set that are contained in the interval $[c, C]$, where $c, C\in (0,1)$. It holds that $d(r \parallel r^{\prime}) < \infty$ for all $r, r^{\prime}\in \calR$. 
For the more general distributions over a common discrete alphabet $\calA$, say distributions $P$ and $Q$, the Kullback-Leibler (KL) divergence between $P$ and $Q$ is defined as follows. 
\begin{definition}[KL-divergence]
Let $\calA$ be a  discrete alphabet (finite or countably infinite), and $P$ and $Q$ be two distributions over $\calA$. Then define 
\begin{align*}
D(P\parallel Q) & : = \sum_{a\in \calA} P(a) \log \pth{\frac{P(a)}{Q(a)}},
\end{align*}
where $0\cdot \log \pth{\frac{0}{0}} =0$ by convention. 
\end{definition}
Note that $D(P\parallel Q) \ge 0$ and $D(P\parallel Q)=0$ if and only if $P=Q$ almost surely. Similar to $d(\cdot \parallel \cdot)$, $D(P\parallel Q)$ is not symmetric in $P$ and $Q$. 
In this paper, we choose the base to be 2. \footnote{Note that any base would work, see \cite[Chapter 1.1]{polyanskiy2014lecture}.} Recall that the set of admissible rate assignments $\calA\calR$ is defined in \eqref{def: admissible rates}. 

\begin{lemma}
\label{lm: m Bernoulli kl}
Fix a finite set $\calR$. Let $\bm{p} = \qth{p_1, \cdots, p_n}$ and $\bm{q} = \qth{q_1, \cdots, q_n}$ be two rate assignments in $\calA\calR$. Let $P_{\bm{S}}$ and $Q_{\bm{S}}$ be the distributions of the $n$ spike sequences of the input neurons under rate assignments $\bm{p}$ and $\bm{q}$, respectively. Then, 
\[
D(P_{\bm{S}}\parallel Q_{\bm{S}}) = T \sum_{i=1}^n d(p_i\parallel q_i). 
\]

\end{lemma}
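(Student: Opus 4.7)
The plan is to exploit the product structure of $P_{\bm{S}}$ and $Q_{\bm{S}}$ together with the well-known tensorization (additivity) property of KL divergence over product measures. By the modeling assumption in Section~\ref{sec: input stochasticity}, the $n$ input spike trains are independent Bernoulli processes, so under $\bm{p}$ we may factor
\begin{equation*}
P_{\bm{S}} = \bigotimes_{i=1}^n P_i, \qquad P_i = \Bern(p_i)^{\otimes T},
\end{equation*}
and analogously $Q_{\bm{S}} = \bigotimes_{i=1}^n Q_i$ with $Q_i = \Bern(q_i)^{\otimes T}$. Because $\bm{p},\bm{q}\in \calA\calR \subseteq \calR^n$ and $\calR\subseteq[c,C]$ with $0<c<C<1$, every pairwise KL divergence $d(p_i\parallel q_i)$ is finite, so no $\infty-\infty$ issues arise when we decompose.

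First I would prove the single-coordinate fact: if $P^{\otimes T}$ and $Q^{\otimes T}$ are the $T$-fold products of $P$ and $Q$ on a common discrete alphabet, then $D(P^{\otimes T}\parallel Q^{\otimes T}) = T\, D(P\parallel Q)$. This is immediate from the definition: writing $P^{\otimes T}(s_1,\ldots,s_T) = \prod_t P(s_t)$ and similarly for $Q$, the log ratio splits into a sum, and marginalizing against $P^{\otimes T}$ in each term gives exactly $D(P\parallel Q)$ for each of the $T$ coordinates. Applied to $P=\Bern(p_i)$, $Q=\Bern(q_i)$, this gives $D(P_i\parallel Q_i) = T\, d(p_i\parallel q_i)$.

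Next, I would invoke the tensorization identity across the $n$ neurons: $D\!\left(\bigotimes_i P_i \,\big\|\, \bigotimes_i Q_i\right) = \sum_i D(P_i\parallel Q_i)$. The proof is again a direct calculation, writing the log of the ratio of product measures as a sum and taking expectations coordinate-wise; the finiteness of each $D(P_i\parallel Q_i)$ guaranteed above lets us interchange the sum and expectation without difficulty. Combining the two steps,
\begin{equation*}
D(P_{\bm{S}}\parallel Q_{\bm{S}}) \;=\; \sum_{i=1}^n D(P_i\parallel Q_i) \;=\; \sum_{i=1}^n T\, d(p_i\parallel q_i) \;=\; T\sum_{i=1}^n d(p_i\parallel q_i),
\end{equation*}
which is the claim. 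There is no real obstacle here; the only mildly delicate point is making the two tensorization steps airtight, which is why I would isolate them as general facts before specializing to Bernoulli coordinates.
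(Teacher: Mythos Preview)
Your proposal is correct and is essentially the same argument as the paper's: both exploit the product structure across neurons and across time to reduce $D(P_{\bm{S}}\parallel Q_{\bm{S}})$ to $\sum_i T\,d(p_i\parallel q_i)$. The only difference is presentational---you isolate the two tensorization identities as abstract facts and then specialize, whereas the paper writes out the double sum over $\bm{s}$ and $t$ explicitly---but the underlying reasoning is identical.
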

Lemma \ref{lm: m Bernoulli kl} is proved in Appendix \ref{app: proof of kl lemma}.

For the given $\calR$, define task complexity $T_{\calR}$ 
as 
\begin{align}
\label{eq: set R}
T_{\calR} := \max_{r_1, r_2 \in \calR  \,s.t.\ r_1\not=r_2}\frac{1}{d(r_2 \parallel r_1) + d(r_1 \parallel r_2)}. 
\end{align}
It is closely related to the smallest KL divergence between two distinct statistics in $\calR$. The task complexity  $T_{\calR}$ kicks in due to the adoption of minimax decision framework \eqref{eq: minimax}.  
%
%

The following lemma is used in the proof of our information-theoretic lower bound. This is a technical supporting lemma, and the choice of the specific rate assignments 
is due to some technical convenience in proving Theorem \ref{thm: observation lower bound k_WTA}. 
\begin{lemma}
\label{lm: mutual info. k-WTA}
For any finite set $\calR$, let $r_1, r_2 \in \calR$ such that $r_1\not=r_2$. 
Let $\bm{p}^0 = \qth{p_1^0, \cdots, p_n^0}$ be 
\begin{align}
\label{def: rate 0}
p_{\ell}^0 =
\begin{cases}
r_1, ~ ~ & ~~ \text{if } \ell =1, \cdots, k;\\
r_2, & ~~ \text{otherwise}.  
\end{cases}
\end{align}
For $i=1, \cdots, k$ and $j=k+1, \cdots, n$, define rate assignment $\bm{p}^{ij}$ 
as 
\begin{align*}
p_{\ell}^{ij} =
\begin{cases}
p_{\ell}^{0}, ~ ~ & ~~ \text{if } \ell \not=i, \not=j; \\
p_{j}^{0},  ~ ~ & ~~ \text{if } \ell =i; \\
p_{i}^{0},  ~ ~ & ~~ \text{if } \ell =j.  
\end{cases}
\end{align*}
Let $X_{\bm{p}}$ be a random rate assignment. If $X_{\bm{p}}$ is uniformly distributed over 
\[
\{\bm{p}^{0}\} \cup\sth{\bm{p}^{ij}: ~ i=1, \cdots, k, \& ~ j = k+1, \cdots, n},
\] 
then the mutual information $I(X_{\bm{p}}; \bm{S})$ satisfies the following:
\[
I(X_{\bm{p}}; \bm{S}) \le T \pth{d(r_2 \parallel r_1) + d(r_1 \parallel r_2)}.
\]
\end{lemma}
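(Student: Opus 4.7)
The plan is to combine the variational characterization of mutual information with the additivity identity of Lemma \ref{lm: m Bernoulli kl}. Recall the standard ``golden formula''
\[
I(X;Y) \;=\; \min_{Q}\; \mathbb{E}_{X}\qth{D\pth{P_{Y\mid X}\parallel Q}},
\]
so that for any fixed reference distribution $Q$, one has $I(X;Y) \le \mathbb{E}_{X}\qth{D(P_{Y\mid X}\parallel Q)}$. The cleverness is in choosing the right reference $Q$.

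First I would anchor the reference at the baseline rate assignment, setting $Q^{*} := P_{\bm{S}\mid \bm{p}^{0}}$. This yields
\[
I(X_{\bm{p}};\bm{S}) \;\le\; \mathbb{E}\qth{D\pth{P_{\bm{S}\mid X_{\bm{p}}}\parallel P_{\bm{S}\mid \bm{p}^{0}}}}.
\]
Second, I would exploit the structure of the perturbations: by construction $\bm{p}^{ij}$ agrees with $\bm{p}^{0}$ in every coordinate except $i$ (value $r_{2}$ instead of $r_{1}$) and $j$ (value $r_{1}$ instead of $r_{2}$). Invoking Lemma \ref{lm: m Bernoulli kl}, the product structure of the input spike trains makes the KL divergence tensorize into a sum of per-coordinate Bernoulli KL terms; all but two of them vanish, leaving
\[
D\pth{P_{\bm{S}\mid \bm{p}^{ij}}\parallel P_{\bm{S}\mid \bm{p}^{0}}} \;=\; T\pth{d(r_{2}\parallel r_{1}) + d(r_{1}\parallel r_{2})}.
\]
For the atom $\bm{p}^{0}$ itself the divergence is of course $0$.

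Finally I would average against the uniform law on $\{\bm{p}^{0}\}\cup\{\bm{p}^{ij}: i\in [k], j\in [n]\setminus [k]\}$, a set of size $k(n-k)+1$:
\[
I(X_{\bm{p}};\bm{S}) \;\le\; \frac{k(n-k)}{k(n-k)+1}\cdot T\pth{d(r_{1}\parallel r_{2})+d(r_{2}\parallel r_{1})} \;\le\; T\pth{d(r_{1}\parallel r_{2})+d(r_{2}\parallel r_{1})},
\]
which is the claimed bound. I do not foresee a serious obstacle; the argument is a straightforward application of the variational bound together with the tensorization of Lemma \ref{lm: m Bernoulli kl}. The only subtle point is the choice of reference: using the true marginal $P_{\bm{S}}$ would force Jensen-style convexity manipulations, whereas anchoring at $\bm{p}^{0}$ exploits the symmetric two-coordinate-swap structure of the perturbations $\bm{p}^{ij}$ so that exactly two Bernoulli KL terms survive, independent of $n$ and $k$.
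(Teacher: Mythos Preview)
Your proposal is correct and follows essentially the same route as the paper: both arguments use the variational upper bound $I(X_{\bm{p}};\bm{S})\le \mathbb{E}_{X_{\bm{p}}}\qth{D(P_{\bm{S}\mid X_{\bm{p}}}\parallel Q_{\bm{S}})}$, choose the reference $Q_{\bm{S}}$ to be the law of $\bm{S}$ under $\bm{p}^{0}$, and then invoke Lemma~\ref{lm: m Bernoulli kl} to reduce each $D(P_{\bm{S}\mid \bm{p}^{ij}}\parallel P_{\bm{S}\mid \bm{p}^{0}})$ to the two surviving Bernoulli KL terms $T\pth{d(r_{2}\parallel r_{1})+d(r_{1}\parallel r_{2})}$. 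The only cosmetic difference is that the paper starts from the slightly more general identity $I(X;Y)=\min_{Q_{X}Q_{Y}}D(P_{XY}\parallel Q_{X}Q_{Y})$ before specializing to $Q_{X}=P_{X}$, which is exactly your golden formula.
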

%
%
%
See Appendix \ref{app: it} for definition of $I(\cdot ~ ; ~ \cdot)$. The proof of Lemma \ref{lm: mutual info. k-WTA} can be found in Appendix \ref{app:proof of mutual info}.

It turns out that if the input spike train length $T$ is not sufficiently large (specified in Theorem \ref{thm: observation lower bound k_WTA}), 
no matter how elegant the design of a WTA circuit is (no matter which activation function we choose, how many hidden neurons we use, and how we connect the hidden neurons and output neurons), its actual decision accuracy is always lower than the target decision accuracy $1-\delta$.  
%
\begin{theorem}
\label{thm: observation lower bound k_WTA}
For any $1\le k\le n-1$ and any set $\calR$ and any $\delta \in (0,1)$, if 
\[
T\le \pth{(1-\delta) \log (k(n-k) +1)-1 }T_{\calR},
\]
then 
\begin{align*}
\min_{\hat{\bm{win}}} \max_{\bm{p}\in \calA\calR} \prob{\hat{\bm{win}}\pth{\bm{S}} \not= \calW(\bm{p})} 
& ~ \ge~  \delta,   
\end{align*}
where the min is taken over all possible WTA circuits  
with different choices of activation functions and circuit architectures. 
\end{theorem}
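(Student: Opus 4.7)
The plan is to invoke the standard reduction from minimax lower bounds to multi-way hypothesis testing via Fano's inequality, using the specific prior constructed in Lemma~\ref{lm: mutual info. k-WTA}. First I fix $r_1, r_2 \in \calR$ with $r_1 > r_2$ that attain the maximum defining $T_{\calR}$, so that $d(r_1 \parallel r_2) + d(r_2 \parallel r_1) = 1/T_{\calR}$. I then take the $M := k(n-k)+1$ rate assignments $\{\bm{p}^0\} \cup \{\bm{p}^{ij} : 1 \le i \le k,\ k+1 \le j \le n\}$ from Lemma~\ref{lm: mutual info. k-WTA}. The choice $r_1 > r_2$ guarantees that each $\bm{p}^{ij}$ and $\bm{p}^0$ lies in $\calA\calR$, and a direct inspection shows that their winner sets $\calW(\bm{p}^0) = \{1,\ldots,k\}$ and $\calW(\bm{p}^{ij}) = \{1,\ldots,k\} \setminus \{i\} \cup \{j\}$ are pairwise distinct.

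Next, let $X_{\bm{p}}$ be uniform on these $M$ assignments. Because the $M$ winner sets are pairwise distinct, any estimator $\hat{\bm{win}}(\bm{S})$ induces (by composition with the bijection ``winner set $\leftrightarrow$ hypothesis'') an $M$-ary classifier $\hat{X}(\bm{S})$, and $\hat{\bm{win}}(\bm{S}) \ne \calW(\bm{p})$ on this sub-family is equivalent to $\hat{X}(\bm{S}) \ne X_{\bm{p}}$. Since the minimax error over $\calA\calR$ is at least the Bayes error under any prior supported in $\calA\calR$,
\[
\min_{\hat{\bm{win}}} \max_{\bm{p}\in \calA\calR} \prob{\hat{\bm{win}}(\bm{S}) \ne \calW(\bm{p})} \; \ge \; \min_{\hat{X}} \prob{\hat{X}(\bm{S}) \ne X_{\bm{p}}} \; =: \; P_e.
\]

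Now I apply Fano's inequality along the Markov chain $X_{\bm{p}} \to \bm{S} \to \hat{X}$. Using $h(P_e) \le 1$ and $H(X_{\bm{p}}) = \log M$,
\[
P_e \log(M-1) + 1 \; \ge \; H(X_{\bm{p}} \mid \bm{S}) \; = \; \log M - I(X_{\bm{p}}; \bm{S}).
\]
By Lemma~\ref{lm: mutual info. k-WTA} and the choice of $r_1, r_2$, we have $I(X_{\bm{p}}; \bm{S}) \le T/T_{\calR}$. Since $\log(M-1) \le \log M$, a routine rearrangement gives $P_e \ge 1 - (T/T_{\calR} + 1)/\log M$. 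Plugging in the hypothesis $T \le ((1-\delta)\log M - 1)\, T_{\calR}$ then immediately yields $P_e \ge \delta$, which is the desired conclusion.

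The main obstacle I anticipate is cosmetic rather than structural: making sure the reduction step is airtight when $\hat{\bm{win}}$ is allowed to output \emph{any} $k$-subset of $[n]$ (not just one of the $M$ special winner sets) and when the activation functions may be randomized. Both are handled by noting that deterministic post-processing (the ``winner-set $\to$ hypothesis index'' map, with any fixed default for foreign subsets) can only increase the probability of error on the sub-family, and by absorbing the extra randomness of the activation functions into $\bm{S}$ via a larger probability space, which preserves both the Markov chain and the Lemma~\ref{lm: m Bernoulli kl}/Lemma~\ref{lm: mutual info. k-WTA} mutual-information bounds.
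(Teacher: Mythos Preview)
Your proposal is correct and follows essentially the same route as the paper: pick $r_1>r_2$ achieving $T_{\calR}$, place the uniform prior on the $k(n-k)+1$ rate assignments of Lemma~\ref{lm: mutual info. k-WTA}, lower-bound the minimax risk by the Bayes risk (the paper phrases this as Yao's minimax principle), and then combine Fano's inequality with the mutual-information bound from Lemma~\ref{lm: mutual info. k-WTA}. Your closing paragraph on handling arbitrary $k$-subset outputs and randomized circuits makes explicit a reduction the paper leaves implicit in its ``genie-aided'' remark, but otherwise the arguments coincide.
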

%
Theorem \ref{thm: observation lower bound k_WTA} says that if $T< \pth{(1-\delta) \log (k(n-k) +1)-1 }T_{\calR}$, the worst case probability error of any WTA circuit is greater than $\delta$, i.e., $\max_{\bm{p}\in \calA\calR} \prob{\hat{\bm{win}}\pth{\bm{S}} \not= \calW(\bm{p})}> \delta$. 
Theorem \ref{thm: observation lower bound k_WTA} is proved in Appendix \ref{app: proof of theorem lower bound}. 
\begin{remark}[Tightness of the lower bound in Theorem \ref{thm: observation lower bound k_WTA}]
Following our line of argument, by considering a richer family of critical rate assignments in Lemma \ref{lm: mutual info. k-WTA}, we might be able to obtain a tighter lower bound. Nevertheless, the constructed WTA circuit in Section \ref{sec: optimal SNNs} turn out to be order-optimal -- its decision time matches the lower bound in Theorem \ref{thm: observation lower bound k_WTA} up to a multiplicative constant factor.
This immediately implies that the lower bound obtained in Theorem \ref{thm: observation lower bound k_WTA} is tight up to a multiplicative constant factor. 
\end{remark}

\section{Order-Optimal WTA Circuits}
\label{sec: optimal SNNs}
In Section \ref{sec: SNN}, we provided a general description of the computation model we are interested in. In this section, we construct a specific WTA circuit under this general computation model. This WTA circuit turns out to be order-optimal in terms of decision time -- the decision  time of our WTA circuit matches the lower bound in Section \ref{sec: it lower bound} up to a multiplicative constant factor. To do that, we need to specify (1) the network structure, including the number of hidden neurons, the collection of synapses (directed communication links) between neurons, and the weights of these synapses; (2) the memorization capability of each neuron, i.e., the magnitude of $m$; and (3) $\phi_\nu$ -- the activation function used by neuron $\nu$. 

\subsection{Circuit Design}
In our designed circuit, there are four parameters  $\calR$, $m$, $b$, and $\delta$, where $\calR \subseteq [c, C]$~\footnote{Recall that $c, C ~ \in (0, 1)$ are two absolute constants, i.e., they do not change with other parameters of the WTA circuit such as $n$, $k$, and $\delta$.} is a finite set from which the $p_i$'s of the input spike trains are chosen, $m$ is the memory range and $b$ is the bias at the non-input neurons, and $(1- \delta)$ is the target decision accuracy (i.e., success probability). Here, we assume that every non-input neuron has the same bias, i.e., $b_\nu = b$ for all non-input neurons $\nu$.    
%
The four parameters $\calR$, $m$, $b$, and $\delta$ can be viewed as some prior knowledge of the WTA circuit; they might be learned through some unknown network development procedure which is outside the scope of this work. In Sections \ref{subsub: Network structure}, \ref{subsub: Activation functions}, and \ref{subsub: local memory}, we present the network structure and the activation functions adopted, and the requirement on $m$. For completeness, we specify the local memory update (in particular the vector $\bm{V}$) separately in Section \ref{subsub: Local memory update}. The dynamics of our WTA circuit is summarized in Section \ref{subsub: algorithm}. 

%
 
%


%
%
\subsubsection{Network structure:}
\label{subsub: Network structure}
We propose a WTA circuit with the following network structure: 
\begin{itemize}
	\item All output neurons are connected to each other by a complete graph. That is, $(v_i,v_j)\in E$ for all $v_i,v_j \in N_{out}$ such that $v_i\not= v_j$; 
	\item Each edge from an input neuron to an output neuron has weight $1$, i.e.,   
	$\weight(u_i,v_i) = 1$ for all $u_i\in N_{h}, v_i\in N_{out}$. 
	\item All edges among the output neurons have weights $-\frac{1}{k}$. 
	That is,  
	$\weight(v_i,v_j) = - \frac{1}{k}$ for all $v_i,v_j \in N_{out}$ such that $v_i\not= v_j$. 
	\item There are no hidden neurons, i.e., $N_{h} = \emptyset$; 
\end{itemize}

\subsubsection{Update local charge vector:}
\label{subsub: Local memory update}

With the above choice of network structure, the charge $V_{t-1}(v_i)$ at the output neuron $v_i$ at time $t-1$ is 
\begin{align*}
    V_{t-1}(v_i) ~ = ~ S_{t-1}(u_i) - \frac{1}{k}\sum_{j: 1\le j\le n, \& ~ j\not=i} S_{t-1}(v_j). 
\end{align*}
Notably, $V_{t-1}(v_i) \in [-1, 1]$ for all $t\ge 1$ and output neuron $v_i$. 
%
%
When $k=1$, the above update becomes
\begin{align*}
    V_{t-1}(v_i) ~ = ~ S_{t-1}(u_i) - \sum_{j: 1\le j\le n, \& ~ j\not=i} S_{t-1}(v_j). 
\end{align*}
which can be viewed as a spike model counterpart of the potential update under the traditional continuous rate model \cite{kriener2017fast,mao2007dynamics} with lateral inhibition. 

It is easy to see the following claims hold. For brevity, their proofs are omitted. 
\begin{claim}
\label{claim: 1}
For $t\ge 1$ and for $i=1, \cdots, n$,  
$V_{t-1}(v_i)>0$ if and only if $S_{t-1}(u_i)=1$ and $\sum_{j: 1\le j\le n, \& ~ j\not=i} S_{t-1}(v_j)\le k-1$, i.e.,  
at time $t-1$, input neuron $u_i$ spikes, and fewer than $k-1$ other output neurons spike.
\end{claim}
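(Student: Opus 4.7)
The plan is to unfold the definition of $V_{t-1}(v_i)$ from the update rule in Section~\ref{subsub: Local memory update} and do a direct two-case analysis on the binary variable $S_{t-1}(u_i) \in \{0,1\}$. Since all spiking variables lie in $\{0,1\}$ and the self-excitation weight is $+1$ while the lateral inhibition weights are $-1/k$, the charge
\[
V_{t-1}(v_i) ~ = ~ S_{t-1}(u_i) - \frac{1}{k}\sum_{j\neq i} S_{t-1}(v_j)
\]
is bounded in $[-1,1]$, which constrains the possibilities tightly.

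First I would handle the $(\Leftarrow)$ direction: assume $S_{t-1}(u_i)=1$ and that at most $k-1$ other output neurons spike. Then the sum $\sum_{j\neq i} S_{t-1}(v_j) \le k-1$, so the inhibitory contribution is at most $(k-1)/k < 1$, and hence $V_{t-1}(v_i) \ge 1 - (k-1)/k = 1/k > 0$. For the $(\Rightarrow)$ direction, I would take the contrapositive and split into the two cases where the right-hand side fails: either $S_{t-1}(u_i) = 0$, in which case $V_{t-1}(v_i) = -\frac{1}{k}\sum_{j\ne i} S_{t-1}(v_j) \le 0$; or $S_{t-1}(u_i) = 1$ but $\sum_{j\ne i} S_{t-1}(v_j) \ge k$, in which case $V_{t-1}(v_i) \le 1 - k/k = 0$. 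In both cases, $V_{t-1}(v_i) \le 0$, contradicting $V_{t-1}(v_i) > 0$.

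There is essentially no obstacle here — the claim is a direct arithmetic consequence of the chosen synapse weights ($+1$ feedforward, $-1/k$ lateral) and the binary nature of spikes. The only modest care is in noting that the threshold $1/k$ appearing in the positive case is precisely the slack that makes the $k$-WTA inhibition scheme work: any set of exactly $k$ simultaneously spiking output neurons drives each of the remaining output neurons' charges to at most $0$, which is the mechanism that the later dynamics analysis in Section~\ref{sec: optimal SNNs} will leverage. So I would keep the proof to just a few lines of case analysis and note the boundary quantities $1/k$ and $0$ that arise, since these constants recur in the circuit's correctness argument.
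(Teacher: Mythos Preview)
Your proposal is correct; the paper explicitly omits the proof of this claim as straightforward, and your two-direction case analysis on $S_{t-1}(u_i)\in\{0,1\}$ is exactly the elementary argument the authors have in mind. One small side remark: the parenthetical assertion that $V_{t-1}(v_i)\in[-1,1]$ (which the paper also makes) is not literally needed and in fact the lower bound can be $-(n-1)/k$, but since your actual case analysis never uses that bound, the proof itself is unaffected.
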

\begin{claim}
\label{claim: 2}
For $t\ge 1$ and for $i=1, \cdots, n$, 
$V_{t-1}(v_i)\le -1$ only if $\sum_{j: 1\le j\le n, \& ~ j\not=i} S_{t-1}(v_j)\ge k$, i.e., 
at time $t-1$, more than $k$ other output neurons spike. 
\end{claim}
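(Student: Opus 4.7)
The plan is to prove Claim 2 by direct algebraic manipulation of the defining equation for $V_{t-1}(v_i)$, using only the fact that spiking state variables take values in $\{0, 1\}$. Since the claim is of the form ``$A$ only if $B$'', I will assume the hypothesis $V_{t-1}(v_i) \le -1$ and deduce the conclusion $\sum_{j \neq i} S_{t-1}(v_j) \ge k$. There is no induction or multi-step dynamical reasoning required; the claim is purely about the instantaneous update rule established in Section~\ref{subsub: Local memory update}.

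First, I would recall the explicit formula derived in Section~\ref{subsub: Local memory update} from the chosen network structure (weight $1$ on input-to-output edges, weight $-1/k$ on edges among output neurons):
\[
V_{t-1}(v_i) ~ = ~ S_{t-1}(u_i) - \frac{1}{k}\sum_{j: 1\le j\le n, \, j\not=i} S_{t-1}(v_j).
\]
Next, I would substitute this into the hypothesis $V_{t-1}(v_i) \le -1$ and rearrange to obtain
\[
\sum_{j: 1\le j\le n, \, j\not=i} S_{t-1}(v_j) ~ \ge ~ k\pth{S_{t-1}(u_i) + 1}.
\]

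The final step is to invoke the trivial bound $S_{t-1}(u_i) \ge 0$, which holds since spiking states are $\{0,1\}$-valued. This yields $k(S_{t-1}(u_i) + 1) \ge k$, and chaining the inequalities gives $\sum_{j \neq i} S_{t-1}(v_j) \ge k$, as required. In fact, the derivation shows a slightly stronger statement: if $u_i$ also spikes at time $t-1$, then at least $2k$ other output neurons must spike, but for the claim it suffices to drop this refinement.

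The only ``obstacle'' is essentially bookkeeping: making sure the summation range and the sign convention from the inhibitory weight $-1/k$ are tracked correctly, and explicitly noting that the nonnegativity of $S_{t-1}(u_i)$ is what transforms the lower bound $k(S_{t-1}(u_i)+1)$ into the clean threshold $k$. Since $V_{t-1}(v_i) \in [-1,1]$ is already noted in the text, the hypothesis $V_{t-1}(v_i) \le -1$ actually forces equality $V_{t-1}(v_i) = -1$ in the ``easy'' case, but this observation is not needed for the proof.
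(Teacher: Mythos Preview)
Your proof is correct: the direct algebraic rearrangement of the defining formula for $V_{t-1}(v_i)$ together with $S_{t-1}(u_i)\ge 0$ is exactly the intended one-line verification, and the paper itself omits the proof (``For brevity, their proofs are omitted'').
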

Note that $\sum_{j: 1\le j\le n, \& ~ j\not=i} S_{t-1}(v_j)\ge k$ is not a sufficient condition to have $V_{t-1}(v_i)\le -1$. 
To see this, suppose $\sum_{j: 1\le j\le n, \& ~ j\not=i} S_{t-1}(v_j)= k$ and $S_{t-1}(u_i)=1$. In this case it holds that $V_{t-1}(v_i)=0$. 
\begin{claim}
\label{claim: 3}
For $t\ge 1$ and for $i=1, \cdots, n$, if $V_{t-1}(v_i)=0$, one of the following holds: \\
(1) $S_{t-1}(u_i)=1$ and $\sum_{j: 1\le j\le n, \& ~ j\not=i} S_{t-1}(v_j) = k$, i.e., 
at time $t-1$, input neuron $u_i$ spikes, and exactly $k$ other output neurons spike; \\
(2) $S_{t-1}(u_i)=0$ and $\sum_{j: 1\le j\le n, \& ~ j\not=i} S_{t-1}(v_j) = 0$, i.e.,
at time $t-1$, input neuron $u_i$ does not spike, and no other output neurons spike. 
\end{claim}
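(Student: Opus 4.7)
The plan is to argue directly from the explicit formula for the charge derived in Section \ref{subsub: Local memory update}, namely
\[
V_{t-1}(v_i) \;=\; S_{t-1}(u_i) \;-\; \frac{1}{k}\sum_{j:\, 1\le j\le n,\; j\neq i} S_{t-1}(v_j).
\]
Setting $V_{t-1}(v_i) = 0$ yields the scalar identity
\[
S_{t-1}(u_i) \;=\; \frac{1}{k}\sum_{j:\, 1\le j\le n,\; j\neq i} S_{t-1}(v_j),
\]
so the proof reduces to reading off the integer solutions of this equation under the binary constraints $S_{t-1}(u_i),S_{t-1}(v_j)\in\{0,1\}$.

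The argument then splits into two cases on the value of $S_{t-1}(u_i)$. If $S_{t-1}(u_i)=0$, then the right-hand side must vanish; since each $S_{t-1}(v_j)$ is nonnegative, this forces $S_{t-1}(v_j)=0$ for every $j\neq i$, giving case (2) of the claim. If instead $S_{t-1}(u_i)=1$, then rearranging yields $\sum_{j\neq i} S_{t-1}(v_j) = k$, which is case (1). These two cases are exhaustive because $S_{t-1}(u_i)$ is $\{0,1\}$-valued, completing the proof.

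There is no real obstacle here: the claim is a direct algebraic consequence of the definition of $V_{t-1}(v_i)$ together with the binary nature of the spiking-state variables, in the same spirit as Claims \ref{claim: 1} and \ref{claim: 2}. The only thing one must be careful about is that the right-hand side $\frac{1}{k}\sum_{j\neq i} S_{t-1}(v_j)$ is a multiple of $1/k$, so matching it to $S_{t-1}(u_i)\in\{0,1\}$ forces the sum to be either $0$ or exactly $k$ (and not any intermediate integer), which is precisely what yields the clean dichotomy in the statement.
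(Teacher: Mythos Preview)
Your proof is correct and is exactly the straightforward algebraic argument the paper has in mind; the paper explicitly omits the proofs of Claims~\ref{claim: 1}--\ref{claim: 3}, noting only that ``it is easy to see the following claims hold.'' There is nothing to add.
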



\subsubsection{Activation functions:}
\label{subsub: Activation functions}
There are many different choices of activation functions; see \cite{wikiacf} for a detailed list. In our construction, we use a simple threshold activation function, 
i.e., 
\begin{align*}
S_{t}(v_i) =  
\begin{cases}
1, & \text{if } (b-1) \indc{S_{t-1}(v_i) =1} + \qth{
\sum_{r=1}^m\indc{V_{t-r}(v_i) >0} - m \sum_{r=1}^{m} \indc{V_{t-r}(v_i) \le -1}}_+\ge b; \\
0, & \text{otherwise,} 
\end{cases} 
\end{align*}
$\qth{\cdot}_+ = \max \qth{\cdot, 0}$, and $b>0$ is the bias at neuron $v_i$ for $i=1, \cdots, n$. 
It is easy to see that this activation function falls under the general form given by \eqref{eq: activation}. 

\begin{remark}
If the output neuron $v_i$ does not spike at time $t-1$, i.e., $S_{t-1}(v_i) =0$, then in order for $v_i$ to spike at time $t$, the following needs to hold: 
\[
\qth{
\sum_{r=1}^m \indc{V_{t-r}(v_i) >0} - m \sum_{r=1}^{m} \indc{V_{t-r}(v_i) \le -1}}_+\ge b. 
\]
In contrast, if the output neuron $v_i$ does spike at time $t-1$, i.e., $S_{t-1}(v_i) =1$, then 
\[
\qth{
\sum_{r=1}^m \indc{V_{t-r}(v_i) >0} - m \sum_{r=1}^{m} \indc{V_{t-r}(v_i) \le -1}}_+\ge 1
\]
is enough for $v_i$ to spike at time $t$. That is, under our activation rule, $S_{t-1}(v_i) =1$ makes the activation of $v_i$ much easier in the next round. However, if there exists $r \in \sth{1, 2, \cdots, m}$ such that 
\[
\indc{V_{t-r}(v_i) \le -1}=1,
\]
then 
\begin{align*}
\sum_{r=1}^m \indc{V_{t-r}(v_i) >0} - m \sum_{r=1}^{m} \indc{V_{t-r}(v_i) \le -1} & \le \sum_{r=1}^m \indc{V_{t-r}(v_i) >0} - m \\
& \le m - m = 0. 
\end{align*}
Thus, 
\begin{align*}
&(b-1) \indc{S_{t-1}(v_i) =1} + \qth{\sum_{r=1}^m \indc{V_{t-r}(v_i) >0} - m \sum_{r=1}^{m} \indc{V_{t-r}(v_i) \le -1}}_+\\
& = (b-1) \indc{S_{t-1}(v_i) =1} + 0 \\
&\le b-1 < b, 
\end{align*}
i.e., the output neuron $v_i$ does not spike at time $t$. In other words, as long as there exists $r \in \sth{1, 2, \cdots, m}$ such that $
\indc{V_{t-r}(v_i) \le -1}=1$, the activation of $v_i$ is inhibited at time $t$.

\end{remark}

\subsubsection{Local memorization capability:}
\label{subsub: local memory}
%
In our proposed circuit, we require that $m$ satisfies the following: 
\begin{align}
\label{eq: high h}
m \ge \frac{8C^2(1-c)}{c^2(1-C)}\pth{\log \pth{\frac{3}{\delta}} + \log k(n-k) } T_{\calR} ~ := ~ m^* 
\end{align}
for target decision accuracy $1-\delta \in (0, 1)$. In addition, we set $b=cm^*$. 
Recall that $c, C\in (0,1)$ are two absolute constants that are lower bound and upper bound of any $\calR$, respectively. 

Intuitively, when other parameters are fixed, the higher the desired accuracy (i.e., the smaller $\delta$) , the larger $m^*$, i.e., the more memory is needed for selecting the winners in our WTA circuit. Similarly, the easier to distinguish two spike trains with different statistics (i.e., the lower $T_{\calR}$), the smaller $m^*$. Interesting, with other parameters fixed, $m^*$ depends on $k$ as follows:  $m^*$ is increasing in $k$ when $k\in \sth{1, \cdots, \lfloor \frac{n}{2}\rfloor }$, and $m^*$ is decreasing in $k$ when $k\in \sth{\lceil \frac{n}{2}\rceil, \cdots, n-1}$. In many practical settings we care about the region where $k\ll n$.     
Besides, with the choice of bias $b=cm^*$, the larger $m^*$ also implies longer time is needed for our WTA circuit to declare $k$ winners; details can be found (1) in Theorem \ref{thm: k wta threhold real}.

On the other hand, in most neurons the synaptic plasticity time window is about 80-120 ms, and it is unclear whether \eqref{eq: high h} can be immediately satisfied or not. Fortunately, even if \eqref{eq: high h} is not immediately satisfied by a neuron due to its local bio-plausibility, it is possible that its local memory might be realized 
using a chain of hidden neurons. 

\subsubsection{Algorithm \ref{alg: k WTA}}
\label{subsub: algorithm}
The dynamics of our WTA circut is summarized in Algorithm \ref{alg: k WTA}, which is fully determined by what has been described in Sections \ref{subsub: Network structure}, \ref{subsub: Local memory update}, \ref{subsub: Activation functions}, and \ref{subsub: local memory}. 
For Algorithm \ref{alg: k WTA}, we declare the first $k$ output neurons that spike simultaneously to be winners. 
\begin{algorithm}
\caption{$k$--WTA} 
\label{alg: k WTA}
\LinesNumbered
{\em \bf Input:} $\calR$, $m$, $b$, and $\delta$. 


\vskip 0.6\baselineskip 

\For{$t\ge 1$}{
{\bf  At output neuron $v_i$ for $i=1, \cdots, n$:}
%
$V_{t-1}(v_i) \gets S_{t-1}(u_i) - \frac{1}{k}\sum_{j: 1\le j \le n, \& j\not=i} S_{t-1}(v_j)$\;  

~ $\bm{V}_{t-1}(v_i) \gets \qth{V_{t-1}(v_i), V_{t-2}(v_i), \cdots, V_{t-m}(v_i)}$\; 

~ $\bm{S}_{t-1}(v_i) \gets \qth{S_{t-1}(v_i), S_{t-2}(v_i), \cdots, S_{t-m}(v_i)}$\;  

~ $M_t(v_i) \gets \pth{\bm{V}_{t-1}(v_i), \bm{S}_{t-1}(v_i)}$\;   
 \eIf{$(b-1)\indc{S_{t-1}(v_i) =1} + \qth{\sum_{r=1}^m \indc{V_{t-r}(v_i) >0} - m \sum_{r=1}^{m} \indc{V_{t-r}(v_i) \le -1}}_+ \ge b$}
 {$S_{t}(v_i)\gets 1$.}
 {$S_{t}(v_i)\gets 0$.}
}
\vskip 0.2\baselineskip 

%
\end{algorithm}









%
\subsection{Circuit Performance}
\label{subsec: main results}
Recall that $\calW(\bm{p})$ and $m^*$ are defined in \eqref{eq: true winners} and \eqref{eq: high h}, respectively. 
\begin{theorem}
\label{thm: k wta threhold real}
Fix $\delta \in (0, 1]$, and $1\le k \le n-1$. Choose $m\ge m^*$ and $b= \max\sth{cm^*, 2}$. 
Then for any admissible rate assignment $\bm{p}$, with probability at least $1-\delta$, the following hold: 
\begin{itemize}
\item[(1)] There exist $k$ output neurons that spike simultaneously 
by time $m^*$.  
\item[(2)] The first set of such $k$ output neurons are the true winners $\calW(\bm{p})$.  
\item[(3)] From the first time in which these $k$ output neurons spike simultaneously, these $k$ output neurons spike consecutively for at least $b$ times, and no other output neurons can spike within $b$ times. 
\end{itemize}
\end{theorem}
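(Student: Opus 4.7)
The plan is to decompose the argument into an accumulation phase, in which the output neurons race to cross the spiking threshold via independent Bernoulli sums, and a lock-in phase, in which the first $k$ winners continue to fire while the losers are frozen out by the $[\,\cdot\,]_+$ mechanism of Algorithm~\ref{alg: k WTA}. The parameters $m^*=\Theta\bigl((\log(1/\delta)+\log(k(n-k)))T_{\calR}\bigr)$ and $b=cm^*$ are tuned via the symmetric KL gap $1/T_{\calR}$ so that a union bound over the $k(n-k)$ winner--loser pairs closes at total failure probability $\delta$.

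\textbf{Accumulation phase.} Before any output neuron has spiked, Claims~\ref{claim: 1} and \ref{claim: 2} give $\indc{V_{t-1}(v_i)>0}=S_{t-1}(u_i)$ and $\indc{V_{t-1}(v_i)\le -1}=0$, so the activation condition for a fresh $v_i$ reduces to $\sum_{r=1}^m S_{t-r}(u_i)\ge b$, a $\Binom(m,p_i)$ event. I would pick an intermediate rate $\bar p$ with $\max_\ell p_\ell<\bar p<\min_w p_w$ (which exists since the winners strictly beat the losers) and analyze the race at the horizon $T:=b/\bar p$: the KL-form Chernoff bound gives $\Pr\{\Binom(T,p_\ell)\ge b\}\le e^{-T d(\bar p\parallel p_\ell)}$ and $\Pr\{\Binom(T,p_w)<b\}\le e^{-T d(\bar p\parallel p_w)}$, and the calibration $m^*=\Theta((\log(1/\delta)+\log(k(n-k)))T_{\calR})$ drives both probabilities below $\delta/(3k(n-k))$. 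Union-bounding over all $k$ winners and $n-k$ losers shows that with probability $\ge 1-\delta/3$, every winner's accumulator reaches $b$ by some time $T_w\le T\le m^*$, while every loser's accumulator remains below $b$ for all $t\le T$.

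\textbf{First simultaneous spike (parts 1 and 2).} Because of the $(b-1)$ self-boost in the activation rule, once winner $v_w$ fires at $T_w$ it continues to fire at every subsequent step, provided the bracket sum $\sum_{r=1}^m\indc{V_{t-r}(v_w)>0}$ stays $\ge 1$; during $[T_w,T^\star]$ with $T^\star:=\max_w T_w$, fewer than $k$ output neurons other than $v_w$ can be firing, so Claim~\ref{claim: 2} rules out any $V_\cdot(v_w)\le -1$ event, and the bracket is at least the number of input spikes of $u_w$ in the last $m$ steps, which is $\ge 1$ with failure probability $(1-c)^m\le\delta/(3k)$. Hence all $k$ winners fire simultaneously at $T^\star\le m^*$; combined with the loser bound above, the first set of $k$ output neurons to fire simultaneously is exactly $\calW(\bm p)$.

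\textbf{Lock-in phase (part 3) and main obstacle.} Starting from $T^\star$, I proceed by induction on $s=0,\ldots,b-1$, showing that at time $T^\star+s$ exactly the $k$ winners fire. For a winner $v_w$, when all $k$ winners fire the update rule gives $V(v_w)=S(u_w)-(k-1)/k\in\{1/k,-(k-1)/k\}$, never $\le -1$, and the $(b-1)$ self-boost makes a single $V>0$ event in the window sufficient, supplied by $u_w$ with failure probability $(1-c)^m$. For a loser $v_\ell$, whenever $u_\ell$ fails to spike while all $k$ winners fire, $V(v_\ell)=S_t(u_\ell)-1=-1$ directly from the update rule; by Chernoff on $\Bern(1-p_\ell)$ with $1-p_\ell\ge 1-C$, this occurs within $O(\log(n/\delta)/(1-C))$ steps of $T^\star$ with probability $\ge 1-\delta/(3(n-k))$, and a single such $V\le -1$ event kills the bracket $[\,\cdot\,]_+$ for the next $m\ge m^*\ge b$ steps, so $v_\ell$ cannot fire throughout $[T^\star,T^\star+b]$. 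Union-bounding over all winners and losers absorbs the remaining $2\delta/3$ of the failure budget. The delicate step is the accumulation phase, where winner and loser accumulators both grow and the ordering of crossing times must be extracted from the symmetric KL gap $1/T_{\calR}$ rather than from a naive mean separation; the numerical constant $8C^2(1-c)/(c^2(1-C))$ in $m^*$ is precisely what is needed to make the race bound hold uniformly for all admissible rate configurations in $\calR\subseteq[c,C]$.
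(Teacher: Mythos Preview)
Your accumulation-phase argument is essentially the paper's: define hitting times for the raw input accumulators, pick the intermediate horizon $t^*=b/\bar p$ with $\bar p$ between $p_{k+1}$ and $p_k$ (the paper takes $\bar p=(p_k+p_{k+1})/2$), apply the KL-form Chernoff bound at $t^*$, and union-bound over the $k(n-k)$ winner--loser pairs. The paper makes the link to $T_{\calR}$ explicit via a second-order Taylor estimate (its Lemma~\ref{lm: bernoulli int}), which gives $d(\bar p\parallel p_k),\, d(\bar p\parallel p_{k+1}) \ge \tfrac{c(1-C)}{8C(1-c)}\bigl(d(p_k\parallel p_{k+1})+d(p_{k+1}\parallel p_k)\bigr) \ge \tfrac{c(1-C)}{8C(1-c)}\cdot T_{\calR}^{-1}$; you allude to this constant but do not carry out the reduction.

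Where your argument departs from the paper --- and leaves a real gap --- is the lock-in phase. You freeze out each loser $v_\ell$ by waiting for a $V_\cdot(v_\ell)\le -1$ event, which you note takes $O(\log(n/\delta)/(1-C))$ steps after $T^\star$. But during that waiting window you have not shown that $v_\ell$ stays silent: your accumulation bound only controls $\sum_{r\le T} S_r(u_\ell) < b$ at the single horizon $T$, and nothing you wrote rules out the loser's bracket crossing $b$ in $(T^\star, T^\star+s]$. If a loser fires in this window, then $k+1$ output neurons are active, and a winner $v_w$ with $S(u_w)=0$ at that step gets $V(v_w)=-1$, which kills \emph{its} bracket for $m$ steps and unravels the induction. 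The paper sidesteps this entirely with a deterministic monotonicity argument: once the $k$ winners fire at $T^\star$, Claim~\ref{claim: 1} forces $\indc{V_t(v_\ell)>0}=0$ for every loser at every subsequent $t$, so the loser's bracket sum $\sum_{r=1}^m \indc{V_{t-r}(v_\ell)>0}$ is nonincreasing after $T^\star$ and stays strictly below $b$. No further randomness is spent, and the $V\le -1$ mechanism is never invoked for losers.

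The same observation also simplifies your ``first simultaneous spike'' step. Since $T^\star\le m^*\le m$, the memory window at any $t\in[T_w,T^\star+1]$ is all of $[1,t-1]$; with fewer than $k$ others firing, $\indc{V_r(v_w)>0}=S_r(u_w)$ exactly, so the winner's bracket is a \emph{cumulative} input sum that was already $\ge b$ at $T_w$ and can only grow. It is therefore deterministically $\ge b\ge 1$ throughout, and your $(1-c)^m$ failure probability (which would also need a union bound over $t$) is unnecessary.
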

The proof of Theorem \ref{thm: k wta threhold real} can be found in Appendix \ref{app: proof of algorithm}. 
The first bullet in Theorem \ref{thm: k wta threhold real} implies that our WTA circuit can provide an output (a selection of $k$ output neurons) by time $m^*$; the second bullet in Theorem \ref{thm: k wta threhold real} says that the circuit's output indeed corresponds to the $k$ true winners; and the third bullet says that the $k$ simultaneous spikes of the selected winners are stable -- the $k$ selected winners continue to spike consecutively for at least $b$ times. 
%
The proof of Theorem \ref{thm: k wta threhold real} essentially says that with high probability, under Algorithm \ref{alg: k WTA}, the number of output neurons that spike simultaneously is monotonically increasing until it reaches $k$. Upon the simultaneous spike of $k$ output neurons, by our threshold activation rule, we know that the other output neurons are likely to be inhibited. In particular, if these $k$ output neurons are the first $k$ output neurons that spike simultaneously, then the activation of the other output neurons are likely to be inhibited for at least $b$ times. 
%
\begin{remark}[Controlling stability]
As can be seen from the proof of Theorem \ref{thm: k wta threhold real}, in the activation function of Algorithm \ref{alg: k WTA}  
\[
(b-1)\indc{S_{t-1}(v_i) =1} + \qth{\sum_{r=1}^m \indc{V_{t-r}(v_i) >0} - m \sum_{r=1}^{m} \indc{V_{t-r}(v_i) \le -1}}_+ \ge b
\]
the first term $(b-1)\indc{S_{t-r}(v_i) =1}$ is crucial in achieving (3) in Theorem \ref{thm: k wta threhold real}. In fact, we can increase the stability period by introducing a stability parameter $s$ such that $1<s\le m$ and modifying the activation rule. Details can be found in Algorithm \ref{alg: k WTA modified}. It is easy to see that the activation function falls under the general form in \eqref{eq: activation}. 
In the new activation function in Algorithm \ref{alg: k WTA modified}, for output neuron $v_i$, once it spikes, it continues to spike for at least $s$ times.  
Following our line of analysis in the proof of Theorem \ref{thm: k wta threhold real}, it can be seen that the declared $k$ winners, from the first time they spike simultaneously, continue to spike consecutively for at least $s$ times.  
\begin{algorithm}
\caption{$k$--WTA} 
\label{alg: k WTA modified}
{\em \bf Input:} $\calR$, $m$, $b$, $\delta$, and $s$ where $1< s\le m$.  

\vskip 0.6\baselineskip 

\For{$t\ge 1$}
{
\vskip 0.3\baselineskip 

{\bf  At output neuron $v_i$ for $i=1, \cdots, n$:} 

~ $V_{t-1}(v_i) \gets S_{t-1}(u_i) - \frac{1}{k}\sum_{j: 1\le j \le n, \& j\not=i} S_{t-1}(v_j)$\;  
~ $\bm{V}_{t-1}(v_i) \gets \qth{V_{t-1}(v_i), V_{t-2}(v_i), \cdots, V_{t-m}(v_i)}$\; 

~ $\bm{S}_{t-1}(v_i) \gets \qth{S_{t-1}(v_i), S_{t-2}(v_i), \cdots, S_{t-m}(v_i)}$ \; 

~ $M_t(v_i) \gets \pth{\bm{V}_{t-1}(v_i), \bm{S}_{t-1}(v_i)}$.

 \eIf{$\qth{\sum_{r=1}^m \indc{V_{t-r}(v_i) >0} - m \sum_{r=1}^{m} \indc{V_{t-r}(v_i) \le -1}}_+ \ge b$}
 {$S_{t}(v_i)\gets 1$.}
 {\eIf{$S_{t-1}(v_i)=1$ and $\exists \, r\in \sth{2, \cdots, s}$ such that $S_{t-r}(v_i)=0$}
 {$S_{t}(v_i)\gets 1$.}
 {$S_{t}(v_i)\gets 0$.}
}
}

\vskip 0.2\baselineskip 

\end{algorithm}

\end{remark}

\begin{remark}[Order-optimality]
\label{rmk: order-optimal WTA}
The decision time performance stated in (1) of Theorem \ref{thm: k wta threhold real} matches the information-theoretical lower bound in Theorem \ref{thm: observation lower bound k_WTA} up to a  multiplicative constant factor both (a) when $\delta$ is sufficiently small and does not depend on $n$, $k$, $T_{\calR}$, $c$, and $C$, and (b) when $\delta$ decays to zero at a speed at most $\frac{1}{(k(n-k))^{c_0}}$ where $c_0>0$ is some fixed constant. 
The detailed order-optimality argument is given next. 

\paragraph{Suppose that $\delta$ is sufficiently small and does not depend on $n$, $k$, $T_{\calR}$, $c$, and $C$}
Here, for ease of exposition, we illustrate the order-optimality with a specific choice of $\delta$. In fact, the order-optimality holds generally for constant $\delta\in (0,1)$ as long as it does not depend on $n$, $k$, $T_{\calR}$, $c$, and $C$. 

Suppose the target decision accuracy is $1-\delta = 0.9$, i.e., $\delta = 0.1$. Then as long as $n\ge 31$, for any $1\le k \le n-1$, %
\begin{align*}
m^* = \frac{8C^2(1-c)}{c^2(1-C)} \pth{\log \frac{3}{0.1}  + \log k(n-k)}T_{\calR}  \le \frac{16C^2(1-c)}{c^2(1-C)} \log k(n-k) T_{\calR}. 
\end{align*}
On the other hand, recall from Theorem \ref{thm: observation lower bound k_WTA} that to have $\delta = 0.1$, the decision time is no less than 
\begin{align*}
\pth{(1-\delta)\log(k(n-k)+1)-1} T_{\calR}  \ge \frac{1}{2} \log(k(n-k)+1)T_{\calR} \ge \frac{1}{2} \log k(n-k)T_{\calR}
\end{align*}
where the first inequality holds as long as $n\ge 8$. Thus, when $n\ge 31$, in order to achieve the decision accuracy $1-\delta = 0.9$, the decision time of 
our WTA circuit is on the same order of the information-theoretic lower bound in Theorem \ref{thm: observation lower bound k_WTA}.

\paragraph{Suppose $\delta$ decays to zero at a moderate speed}
The decision time of our WTA circuit is order-optimal even for diminishing decision error $\delta$ as long as $\delta = \Omega (\frac{3}{(k(n-k))^{c_0}})$ where $c_0>0$ -- it does not decay to zero ``too fast'' in $k(n-k)$. To see this, let $\delta = \frac{3}{\pth{k(n-k)}^{c_0}}$ for some constant $c_0>0$. We have 
\begin{align}
\frac{8C^2(1-c)}{c^2(1-C)}\pth{\log \pth{\frac{3}{\frac{3}{(k(n-k))^{c_0}}}} + \log k(n-k)} T_{\calR} = \frac{8C^2(1-c)(c_0+1)}{c^2(1-C)} \log k(n-k) T_{\calR}. 
\end{align}
\end{remark}

\paragraph{Resetting circuit when the input spike trains become quiescent}
In Algorithm \ref{alg: k WTA}, if the input spike trains become quiescent, then the corresponding circuits  also become quiescent despite some delay in this response. 
\begin{lemma}
\label{lm: quiescent}
If all input neurons are quiescent at time $t_0$, and remain to be quiescent for all $t\ge t_0$,  
then $V_{t}(v_i)=0$ and $S_t(v_i)=0$ for any $t> t_0+m$.   
\end{lemma}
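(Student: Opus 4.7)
\textbf{Proof plan for Lemma \ref{lm: quiescent}.} The plan is to propagate the input's quiescence through two layers: first through the charge update (which is purely algebraic in the current spikes), and then through the activation rule (which looks back $m$ time steps). Once enough time has elapsed that every relevant lookback index sees only a quiescent input, the activation condition must fail, and then the charges trivialize.

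First I would record an immediate consequence of the charge formula in Section~\ref{subsub: Local memory update}: for every $t \ge t_0$, since $S_t(u_i)=0$, we have
\[
V_t(v_i) \;=\; -\tfrac{1}{k}\sum_{j\neq i} S_t(v_j) \;\le\; 0,
\]
so in particular $\indc{V_t(v_i)>0}=0$ throughout the quiescent phase. (This matches Claim~\ref{claim: 1}, since the necessary condition $S_{t-1}(u_i)=1$ is violated.) This bound is what I will feed into the activation rule.

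Next, fix $t > t_0 + m$ and any output neuron $v_i$. For each $r \in \{1,\dots,m\}$ we have $t - r \ge t_0 + 1$, so by the previous step $\indc{V_{t-r}(v_i)>0}=0$. Consequently
\[
\Bigl[\sum_{r=1}^m \indc{V_{t-r}(v_i)>0} \;-\; m\sum_{r=1}^m \indc{V_{t-r}(v_i)\le -1}\Bigr]_+ \;=\; 0,
\]
and the activation condition collapses to $(b-1)\indc{S_{t-1}(v_i)=1}\ge b$. Since Section~\ref{subsub: local memory} takes $b = \max\{cm^*,2\}\ge 2$, the inequality $(b-1)\cdot 1 \ge b$ is impossible, and it is trivially false when $S_{t-1}(v_i)=0$. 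Therefore $S_t(v_i)=0$ for every $i$ and every $t > t_0 + m$.

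Finally, plugging these output spikes back into the charge formula gives
\[
V_t(v_i) \;=\; S_t(u_i) - \tfrac{1}{k}\sum_{j\neq i} S_t(v_j) \;=\; 0 - 0 \;=\; 0
\]
for all $t > t_0 + m$, completing the lemma. There is no real obstacle here; the only thing to be careful about is aligning the indices so that the full lookback window $\{t-m,\dots,t-1\}$ sits inside the quiescent regime $\{t_0, t_0+1,\dots\}$, which is exactly why the statement begins the conclusion at $t > t_0 + m$ rather than at $t \ge t_0$.
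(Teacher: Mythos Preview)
Your argument is correct and follows essentially the same route as the paper's proof: both observe that once the inputs are quiescent the charges satisfy $V_t(v_i)\le 0$, so every indicator $\indc{V_{t-r}(v_i)>0}$ in the lookback window vanishes and the activation reduces to $(b-1)\indc{S_{t-1}(v_i)=1}<b$; quiescence of the outputs then forces the charges to zero. One minor remark: the appeal to $b\ge 2$ is unnecessary, since $(b-1)\cdot 1\ge b$ is equivalent to $-1\ge 0$ and fails for every real $b$; also, $b=\max\{cm^*,2\}$ is the choice made in Theorem~\ref{thm: k wta threhold real}, not in Section~\ref{subsub: local memory}.
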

Lemma \ref{lm: quiescent} is proved in Appendix \ref{app: lm: quiescent}.

\section{Discussion}
\label{sec: discussion}
In this paper, we investigated how $k$-WTA computation is robustly achieved in the presence of inherent noise in the input spike trains. In a spike-based $k$-WTA model, $n$ randomly generated input spike trains are competing with each other, and the top $k$ neurons with highest underlying statistics are the true winners. Given the stochastic nature of the spike trains, it is not trivial to properly select winners among a group of neurons. We 
derived an information-theoretic lower bound on the decision time for a given decision accuracy. Notably, this lower bound holds universally for any WTA circuit that falls within our model framework, regardless of their circuit architectures or their adopted activation functions. Furthermore, we constructed a circuit whose decision time matches this lower bound up to a constant multiplicative factor, suggesting that our derived lower bound is order-optimal. Here the order-optimality is stated in terms of its scaling in $n$, $k$, and $T_{\calR}$. 

\subsection{Comparison to previous WTA models}

Randomness is introduced at different stages of brain computation and the stochastic nature of the spike trains are well observed \cite{baddeley1997responses, kara2000low, maimon2009beyond, ferrari2018simple}. In our work, we focused on how to robustly achieve $k$-WTA computation in face of the intrinsic randomness in the spike trains. A common WTA model assumes that neurons transmit information by a continuous variable such as firing rate \cite{dayan2001theoretical}, which ignores the intrinsic randomness in spiking trains. Although some studies used additive Gaussian noise \cite{kriener2017fast, li2013class, lee1999attention, rougier2006emergence} in their rate-based WTA circuits to account for input randomness, these circuits are usually very sensitive to noise and could not successfully select even a single winner unless additional non-linearity is added \cite{kriener2017fast}. 
In fact, a neuron with a second non-linearity is similar to an output neuron in our constructed WTA circuit in that they both integrate their local inputs. Unfortunately, only simulation results were provided in \cite{kriener2017fast}; a theoretical justification of why such second non-linearity makes their WTA circuit robust to input noise is lacking. 
Though we focused on spike-based model, we hope our results can provide some insights for the rate-based model as well. 
On top of that, a rate-based model would require a high communication bandwidth, yet communication bandwidth is limited in the brain. Our spiking neural network model captures this feature by having a low communication cost, since it broadcasts 1 bit only. 

However, we did not try to model every biologically relevant feature. In several studies using spiking network models, individual units are often modeled with details like ion channels and specific synaptic connectivity. Though more biologically relevant than our spiking neuron network model, those details significantly complicate the analysis. In fact, it could be challenging and intricate to move beyond computer simulation to characterize the model dynamics (such as the spiking nature of each unit, the time it takes to stabilize, etc.)
analytically.

\subsection{Potential applications for physiological experiments}

Our work further provided testable hypotheses on how network size, similarities between input spike trains, and synaptic memory capacity would affect this lower bound. For example, in behavioral experiments using electrolytic lesions or pharmacological inhibition \cite{clark2003contributions, hanks2006microstimulation, yttri2013lesions, katz2016dissociated}, the changes in performance are often highly variable and nonlinear. One possible reason comes from the difficulty of precisely manipulating network size as well as a lack of theoretical description of the relationship between network size and performance. With our analytical characterization, one might be able to estimate changes in the effective network size given performance in a decision-making task. 

Besides the effect of network size, the distribution of feature representations (i.e., different set $\calR$s of different individual animals) could be used to account for between-subject variability in decision making. 
Consider a random-dot coherent motion task where animals need to decide which of two directions the majority of dots are moving \cite{shadlen2001neural}. In this task, performance accuracy and reaction time vary across animals. If we perform neural recordings in their visual cortex (i.e., to record their $\calR$s), we might be able to decode their reaction time or accuracy, given population representations of dot motion in these cortical neurons \cite{shadlen1996motion, jazayeri2006optimal}. For example, an animal whose stimulus-evoked responses are more heterogeneous in the visual cortex might be able to react faster given the same accuracy, governed by our derived lower-bound. 

Last but not least, our work also offered predictions on how local memory capacity could affect performance in decision-making. For example, when there is more ambiguity in input representations, to obtain the same performance (both accuracy and decision time), a larger time window for memory storage in synapses \cite{knoblauch2010memory} is required. From previous experimental work \cite{bittner2017behavioral}, we know that synaptic plasticity has time scale ranging from milliseconds to seconds across different brain regions, and such plasticity could efficiently store entire behavioral sequences within synaptic weights. Combining with our analytical characterization, when performance accuracy changes over time, assuming other parameters such as input statistics, decision time and network size are fixed, one might be able to predict how synaptic plasticity changes. Overall, our work not only provided a theoretical framework, but also provided a set of testable hypotheses on neural recordings and behaviors in decision-making under ambiguity. 

\subsection{Limitations and extensions}
When $\delta$ is a constant, our lower bound is order-optimal in terms of its scaling in $n$, $k$, and $T_{\calR}$. Nevertheless, the scaling of the derived lower bound in terms of $\delta$ is not tight. 
It would be interesting to know the optimal scaling in $\delta$ when other parameters ($n$, $k$, and $T_{\calR}$) are fixed. We leave it as one future direction. 

To simplify complexity, our model posed a few assumptions that ignored some features in the brain. One of these assumptions is that each input neuron is independent. However, various degrees of average noise correlations between cortical neurons have been reported. For example, average noise correlations in primary visual cortex could be close to 0.1 \cite{scholvinck2015cortical}, 0.18 \cite{smith2008spatial}, or even much larger as 0.35 \cite{gutnisky2008adaptive}. Similarly, noise correlations have been observed in other sensory brain regions \cite{cohen2011measuring}. In our work, we ignored correlations between these neurons, but it would be interesting as a future direction to extend in our spiking network model. 

Second, our model used a threshold activation function by assuming the synaptic transmission is basically noise-free and that the only noise source comes from the input in this paper. However, synaptic transmission is highly unreliable in biological networks \cite{allen1994evaluation, faisal2008noise, borst2010low}, and a deterministic activation function would fail to capture this feature compared to a stochastic activation function.
Moreover, failure in synaptic transmission could serve a computational role \cite{branco2009probability,maass1997networks}. 

Another assumption in our circuit is that the output neurons can inhibit each other. In common scenarios, an output neuron is usually excitatory, and does not inhibit other neurons directly without recruiting inhibitory cells. We incorporate stability in these output neurons by assuming they can inhibit each other in our circuit implementation. For a model where an output neuron is limited to be excitatory only, we can add a chain of inhibitory neurons to achieve stability WTA computation. 

Last but not least, in our $k$-WTA circuit, the number of output neurons that spike simultaneously increases monotonically until there are exactly $k$ output neurons that spike simultaneously. We acknowledge that this might not be biologically plausible in most cases in the brain. From large-scale neural recordings, we know that the number of neurons that spike simultaneously is usually variable, so this could be a future direction to construct a circuit that better matches experimental observations.

\section*{Acknowledgement}
We would like to thank Christopher Quinn at Purdue University and Zhi-Hong Mao at University of Pittsburgh for the helpful discussions and references.

\bibliographystyle{alpha}
\bibliography{WTABib}

\newcommand{\etalchar}[1]{$^{#1}$}
\begin{thebibliography}{POMT{\etalchar{+}}02}

\bibitem[AS94]{allen1994evaluation}
Christina Allen and Charles~F Stevens.
\newblock An evaluation of causes for unreliability of synaptic transmission.
\newblock {\em Proceedings of the National Academy of Sciences},
  91(22):10380--10383, 1994.

\bibitem[BAB{\etalchar{+}}97]{baddeley1997responses}
Roland Baddeley, Larry~F Abbott, Michael~CA Booth, Frank Sengpiel, Tobe
  Freeman, Edward~A Wakeman, and Edmund~T Rolls.
\newblock Responses of neurons in primary and inferior temporal visual cortices
  to natural scenes.
\newblock {\em Proceedings of the Royal Society of London B: Biological
  Sciences}, 264(1389):1775--1783, 1997.

\bibitem[BC95]{buzsaki1995temporal}
Gy{\"o}rgy Buzs{\'a}ki and James~J Chrobak.
\newblock Temporal structure in spatially organized neuronal ensembles: a role
  for interneuronal networks.
\newblock {\em Current opinion in neurobiology}, 5(4):504--510, 1995.

\bibitem[BIM98]{berry1998refractoriness}
Michael~J Berry~II and Markus Meister.
\newblock Refractoriness and neural precision.
\newblock In {\em Advances in Neural Information Processing Systems}, pages
  110--116, 1998.

\bibitem[BMG{\etalchar{+}}17]{bittner2017behavioral}
Katie~C Bittner, Aaron~D Milstein, Christine Grienberger, Sandro Romani, and
  Jeffrey~C Magee.
\newblock Behavioral time scale synaptic plasticity underlies ca1 place fields.
\newblock {\em Science}, 357(6355):1033--1036, 2017.

\bibitem[Bor10]{borst2010low}
J~Gerard~G Borst.
\newblock The low synaptic release probability in vivo.
\newblock {\em Trends in neurosciences}, 33(6):259--266, 2010.

\bibitem[BS09]{branco2009probability}
Tiago Branco and Kevin Staras.
\newblock The probability of neurotransmitter release: variability and feedback
  control at single synapses.
\newblock {\em Nature Reviews Neuroscience}, 10(5):373, 2009.

\bibitem[CK11]{cohen2011measuring}
Marlene~R Cohen and Adam Kohn.
\newblock Measuring and interpreting neuronal correlations.
\newblock {\em Nature neuroscience}, 14(7):811, 2011.

\bibitem[CMA{\etalchar{+}}03]{clark2003contributions}
Luke Clark, Facundo Manes, Nagui Antoun, Barbara~J Sahakian, and Trevor~W
  Robbins.
\newblock The contributions of lesion laterality and lesion volume to
  decision-making impairment following frontal lobe damage.
\newblock {\em Neuropsychologia}, 41(11):1474--1483, 2003.

\bibitem[DA01]{dayan2001theoretical}
Peter Dayan and Laurence~F Abbott.
\newblock Theoretical neuroscience: computational and mathematical modeling of
  neural systems.
\newblock 2001.

\bibitem[EB02]{ernst2002humans}
Marc~O Ernst and Martin~S Banks.
\newblock Humans integrate visual and haptic information in a statistically
  optimal fashion.
\newblock {\em Nature}, 415(6870):429, 2002.

\bibitem[FDMM18]{ferrari2018simple}
Ulisse Ferrari, Stephane Deny, Olivier Marre, and Thierry Mora.
\newblock A simple model for low variability in neural spike trains.
\newblock {\em arXiv preprint arXiv:1801.01362}, 2018.

\bibitem[FSW08]{faisal2008noise}
A~Aldo Faisal, Luc~PJ Selen, and Daniel~M Wolpert.
\newblock Noise in the nervous system.
\newblock {\em Nature reviews neuroscience}, 9(4):292, 2008.

\bibitem[GD08]{gutnisky2008adaptive}
Diego~A Gutnisky and Valentin Dragoi.
\newblock Adaptive coding of visual information in neural populations.
\newblock {\em Nature}, 452(7184):220, 2008.

\bibitem[GKvHW96]{gerstner1996neuronal}
Wulfram Gerstner, Richard Kempter, J~Leo van Hemmen, and Hermann Wagner.
\newblock A neuronal learning rule for sub-millisecond temporal coding.
\newblock {\em Nature}, 383(6595):76, 1996.

\bibitem[HDS06]{hanks2006microstimulation}
Timothy~D Hanks, Jochen Ditterich, and Michael~N Shadlen.
\newblock Microstimulation of macaque area lip affects decision-making in a
  motion discrimination task.
\newblock {\em Nature neuroscience}, 9(5):682, 2006.

\bibitem[HDZ08]{hromadka2008sparse}
Tom{\'a}{\v{s}} Hrom{\'a}dka, Michael~R DeWeese, and Anthony~M Zador.
\newblock Sparse representation of sounds in the unanesthetized auditory
  cortex.
\newblock {\em PLoS biology}, 6(1):e16, 2008.

\bibitem[HSM06]{hahn2006phase}
Thomas~TG Hahn, Bert Sakmann, and Mayank~R Mehta.
\newblock Phase-locking of hippocampal interneurons' membrane potential to
  neocortical up-down states.
\newblock {\em Nature neuroscience}, 9(11):1359, 2006.

\bibitem[HW59]{hubel1959receptive}
David~H Hubel and Torsten~N Wiesel.
\newblock Receptive fields of single neurones in the cat's striate cortex.
\newblock {\em The Journal of physiology}, 148(3):574--591, 1959.

\bibitem[HW98]{harris1998signal}
Christopher~M Harris and Daniel~M Wolpert.
\newblock Signal-dependent noise determines motor planning.
\newblock {\em Nature}, 394(6695):780, 1998.

\bibitem[IKN98]{itti1998model}
Laurent Itti, Christof Koch, and Ernst Niebur.
\newblock A model of saliency-based visual attention for rapid scene analysis.
\newblock {\em IEEE Transactions on pattern analysis and machine intelligence},
  20(11):1254--1259, 1998.

\bibitem[JB67]{jacobs1967lower}
I~Jacobs and E~Berlekamp.
\newblock A lower bound to the distribution of computation for sequential
  decoding.
\newblock {\em IEEE Transactions on Information Theory}, 13(2):167--174, 1967.

\bibitem[JM06]{jazayeri2006optimal}
Mehrdad Jazayeri and J~Anthony Movshon.
\newblock Optimal representation of sensory information by neural populations.
\newblock {\em Nature neuroscience}, 9(5):690, 2006.

\bibitem[KCF17]{kriener2017fast}
Birgit Kriener, Rishidev Chaudhuri, and Ila Fiete.
\newblock How fast is neural winner-take-all when deciding between many
  options?
\newblock {\em bioRxiv}, page 231753, 2017.

\bibitem[KF08]{karlsson2008network}
Mattias~P Karlsson and Loren~M Frank.
\newblock Network dynamics underlying the formation of sparse, informative
  representations in the hippocampus.
\newblock {\em Journal of Neuroscience}, 28(52):14271--14281, 2008.

\bibitem[KK01]{kinoshita2001neural}
Masaharu Kinoshita and Hidehiko Komatsu.
\newblock Neural representation of the luminance and brightness of a uniform
  surface in the macaque primary visual cortex.
\newblock {\em Journal of neurophysiology}, 86(5):2559--2570, 2001.

\bibitem[KP04]{knill2004bayesian}
David~C Knill and Alexandre Pouget.
\newblock The bayesian brain: the role of uncertainty in neural coding and
  computation.
\newblock {\em TRENDS in Neurosciences}, 27(12):712--719, 2004.

\bibitem[KPS10]{knoblauch2010memory}
Andreas Knoblauch, G{\"u}nther Palm, and Friedrich~T Sommer.
\newblock Memory capacities for synaptic and structural plasticity.
\newblock {\em Neural Computation}, 22(2):289--341, 2010.

\bibitem[KRR00]{kara2000low}
Prakash Kara, Pamela Reinagel, and R~Clay Reid.
\newblock Low response variability in simultaneously recorded retinal,
  thalamic, and cortical neurons.
\newblock {\em Neuron}, 27(3):635--646, 2000.

\bibitem[KTA{\etalchar{+}}03]{kourtzi2003integration}
Zoe Kourtzi, Andreas~S Tolias, Christian~F Altmann, Mark Augath, and Nikos~K
  Logothetis.
\newblock Integration of local features into global shapes: monkey and human
  fmri studies.
\newblock {\em Neuron}, 37(2):333--346, 2003.

\bibitem[KW04]{kording2004bayesian}
Konrad~P K{\"o}rding and Daniel~M Wolpert.
\newblock Bayesian integration in sensorimotor learning.
\newblock {\em Nature}, 427(6971):244, 2004.

\bibitem[KYPH16]{katz2016dissociated}
Leor~N Katz, Jacob~L Yates, Jonathan~W Pillow, and Alexander~C Huk.
\newblock Dissociated functional significance of decision-related activity in
  the primate dorsal stream.
\newblock {\em Nature}, 535(7611):285, 2016.

\bibitem[LCG{\etalchar{+}}15]{li2015motor}
Nuo Li, Tsai-Wen Chen, Zengcai~V Guo, Charles~R Gerfen, and Karel Svoboda.
\newblock A motor cortex circuit for motor planning and movement.
\newblock {\em Nature}, 519(7541):51, 2015.

\bibitem[LIKB99]{lee1999attention}
Dale~K Lee, Laurent Itti, Christof Koch, and Jochen Braun.
\newblock Attention activates winner-take-all competition among visual filters.
\newblock {\em Nature neuroscience}, 2(4):375, 1999.

\bibitem[LLW13]{li2013class}
Shuai Li, Yangming Li, and Zheng Wang.
\newblock A class of finite-time dual neural networks for solving quadratic
  programming problems and its k-winners-take-all application.
\newblock {\em Neural Networks}, 39:27--39, 2013.

\bibitem[LMP16]{lynch2016computational}
Nancy Lynch, Cameron Musco, and Merav Parter.
\newblock Computational tradeoffs in biological neural networks:
  Self-stabilizing winner-take-all networks.
\newblock {\em arXiv preprint arXiv:1610.02084}, 2016.

\bibitem[MA09]{maimon2009beyond}
Gaby Maimon and John~A Assad.
\newblock Beyond poisson: increased spike-time regularity across primate
  parietal cortex.
\newblock {\em Neuron}, 62(3):426--440, 2009.

\bibitem[Maa97]{maass1997networks}
Wolfgang Maass.
\newblock Networks of spiking neurons: the third generation of neural network
  models.
\newblock {\em Neural networks}, 10(9):1659--1671, 1997.

\bibitem[Maa00]{maass2000computational}
Wolfgang Maass.
\newblock On the computational power of winner-take-all.
\newblock {\em Neural computation}, 12(11):2519--2535, 2000.

\bibitem[MCM07]{majaj2007motion}
Najib~J Majaj, Matteo Carandini, and J~Anthony Movshon.
\newblock Motion integration by neurons in macaque mt is local, not global.
\newblock {\em Journal of Neuroscience}, 27(2):366--370, 2007.

\bibitem[MM07]{mao2007dynamics}
Zhi-Hong Mao and Steve~G Massaquoi.
\newblock Dynamics of winner-take-all competition in recurrent neural networks
  with lateral inhibition.
\newblock {\em IEEE transactions on neural networks}, 18(1):55--69, 2007.

\bibitem[Nel04]{nelken2004processing}
Israel Nelken.
\newblock Processing of complex stimuli and natural scenes in the auditory
  cortex.
\newblock {\em Current opinion in neurobiology}, 14(4):474--480, 2004.

\bibitem[OF04]{olshausen2004sparse}
Bruno~A Olshausen and David~J Field.
\newblock Sparse coding of sensory inputs.
\newblock {\em Current opinion in neurobiology}, 14(4):481--487, 2004.

\bibitem[PG99]{platt1999neural}
Michael~L Platt and Paul~W Glimcher.
\newblock Neural correlates of decision variables in parietal cortex.
\newblock {\em Nature}, 400(6741):233, 1999.

\bibitem[POMT{\etalchar{+}}02]{perez2002oscillations}
Javier Perez-Orive, Ofer Mazor, Glenn~C Turner, Stijn Cassenaer, Rachel~I
  Wilson, and Gilles Laurent.
\newblock Oscillations and sparsening of odor representations in the mushroom
  body.
\newblock {\em Science}, 297(5580):359--365, 2002.

\bibitem[PW14]{polyanskiy2014lecture}
Yury Polyanskiy and Yihong Wu.
\newblock Lecture notes on information theory.
\newblock {\em Lecture Notes for ECE563 (UIUC) and}, 6:2012--2016, 2014.

\bibitem[QKKF08]{quiroga2008sparse}
R~Quian Quiroga, Gabriel Kreiman, Christof Koch, and Itzhak Fried.
\newblock Sparse but not ?grandmother-cell?coding in the medial temporal lobe.
\newblock {\em Trends in cognitive sciences}, 12(3):87--91, 2008.

\bibitem[RP99]{riesenhuber1999hierarchical}
Maximilian Riesenhuber and Tomaso Poggio.
\newblock Hierarchical models of object recognition in cortex.
\newblock {\em Nature neuroscience}, 2(11):1019, 1999.

\bibitem[RPG99]{redgrave1999basal}
Peter Redgrave, Tony~J Prescott, and Kevin Gurney.
\newblock The basal ganglia: a vertebrate solution to the selection problem?
\newblock {\em Neuroscience}, 89(4):1009--1023, 1999.

\bibitem[RV06]{rougier2006emergence}
Nicolas~P Rougier and Julien Vitay.
\newblock Emergence of attention within a neural population.
\newblock {\em Neural Networks}, 19(5):573--581, 2006.

\bibitem[SK08]{smith2008spatial}
Matthew~A Smith and Adam Kohn.
\newblock Spatial and temporal scales of neuronal correlation in primary visual
  cortex.
\newblock {\em Journal of Neuroscience}, 28(48):12591--12603, 2008.

\bibitem[SLW05]{siapas2005prefrontal}
Athanassios~G Siapas, Evgueniy~V Lubenov, and Matthew~A Wilson.
\newblock Prefrontal phase locking to hippocampal theta oscillations.
\newblock {\em Neuron}, 46(1):141--151, 2005.

\bibitem[SN96]{shadlen1996motion}
Michael~N Shadlen and William~T Newsome.
\newblock Motion perception: seeing and deciding.
\newblock {\em Proceedings of the national academy of sciences},
  93(2):628--633, 1996.

\bibitem[SN01]{shadlen2001neural}
Michael~N Shadlen and William~T Newsome.
\newblock Neural basis of a perceptual decision in the parietal cortex (area
  lip) of the rhesus monkey.
\newblock {\em Journal of neurophysiology}, 86(4):1916--1936, 2001.

\bibitem[SS06]{stocker2006noise}
Alan~A Stocker and Eero~P Simoncelli.
\newblock Noise characteristics and prior expectations in human visual speed
  perception.
\newblock {\em Nature neuroscience}, 9(4):578, 2006.

\bibitem[SSB{\etalchar{+}}15]{scholvinck2015cortical}
Marieke~L Sch{\"o}lvinck, Aman~B Saleem, Andrea Benucci, Kenneth~D Harris, and
  Matteo Carandini.
\newblock Cortical state determines global variability and correlations in
  visual cortex.
\newblock {\em Journal of Neuroscience}, 35(1):170--178, 2015.

\bibitem[VRP{\etalchar{+}}18]{verzi2018computing}
Stephen~J Verzi, Fredrick Rothganger, Ojas~D Parekh, Tu-Thach Quach, Nadine~E
  Miner, Craig~M Vineyard, Conrad~D James, and James~B Aimone.
\newblock Computing with spikes: The advantage of fine-grained timing.
\newblock {\em Neural computation}, 30(10):2660--2690, 2018.

\bibitem[wik]{wikiacf}
Activation function.
\newblock \url{https://en.wikipedia.org/wiki/Activation_function}.
\newblock Accessed: 2018-08-08.

\bibitem[Wu17]{YHWu2017lecture}
Yihong Wu.
\newblock Lecture notes on information-theoretic methods for high-dimensional
  statistics.
\newblock {\em Lecture Notes for ECE598YW (UIUC)}, 2017.

\bibitem[YG98]{yuille1998handbook}
AL~Yuille and D~Geiger.
\newblock The handbook of brain theory and neural networks, 1998.

\bibitem[YLS13]{yttri2013lesions}
Eric~A Yttri, Yuqing Liu, and Lawrence~H Snyder.
\newblock Lesions of cortical area lip affect reach onset only when the reach
  is accompanied by a saccade, revealing an active eye--hand coordination
  circuit.
\newblock {\em Proceedings of the National Academy of Sciences},
  110(6):2371--2376, 2013.

\end{thebibliography}

\appendix 

\begin{center}
\Large \bf 
Appendices 
\end{center}

\section{Preliminaries}
\label{app: it}
In this section, we present some preliminaries on information measures and Fano's inequality. Interested readers are referred to \cite{polyanskiy2014lecture} for comprehensive background. 

\subsection{Information Measures} 

Let $X$ and $Y$ be two random variables. The mutual information between $X$ and $Y$, denoted by $I(X; Y)$, measures the dependence between $X$ and $Y$, or, the information about $X$ (resp. $T$) provided by $Y$ (resp. $X$). 
\begin{definition}[Mutual information]
Let $X$ and $Y$ be two random variables. 
\begin{align*}
I(X; Y) : = D(P_{XY} \parallel P_X P_Y),
D(P\parallel Q) & : = \sum_{a\in \calA} P(a) \log \frac{P(a)}{Q(a)},
\end{align*}
where $P_{XY}$ denotes the joint distribution of $X$ and $Y$, and $P_X P_Y$ denotes the product of the marginal distributions of $X$ and $Y$. 
\end{definition}

In the following, we use the notation $X \to Y$ to denote that $Y$ is a (possibly random) function of $X$. Thus, $W\to X\to Y\to \hat{W}$ means that $X$ is a (possibly random) function of $W$; $Y$ is a (possibly random) function of $X$; and $\hat{W}$ is a (possibly random) function of $Y$. 
Fano's inequality:
\begin{theorem}
\cite[Corollary 5.1]{polyanskiy2014lecture} 
 \label{thm: fano}
Let $T: \Theta \to [M]$, and let $\theta \to X\to Y\to \hat{T}(\theta)$ be an arbitrary Markov chain. 
Suppose both $\theta$ and $T(\theta)$ are uniformly distributed over a set of size $M$.  
Then 
\begin{align*}
P_{e} := \prob{T(\theta) \not= \hat{T}(\theta)} 
& \ge  1 - \frac{I(X; Y) + 1}{\log M}. 
\end{align*}
\end{theorem}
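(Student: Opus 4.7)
The plan is to prove Fano's inequality in the standard three-step manner: decompose the entropy of $T(\theta)$, apply a conditional-entropy/error-indicator bound, and finish with the data processing inequality along the Markov chain. Throughout, let $T:= T(\theta)$ and $\hat{T}:=\hat{T}(\theta)$ for brevity, and let $E := \indc{T\neq \hat{T}}$ be the error indicator so that $\Prob\{E=1\} = P_e$.

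First I would exploit the hypothesis that $T$ is uniform on a set of size $M$, which gives $H(T) = \log M$, and then write the identity
\begin{align*}
\log M \;=\; H(T) \;=\; H(T\mid \hat{T}) + I(T;\hat{T}).
\end{align*}
Next I would bound each term. For $H(T\mid \hat{T})$, I would expand $H(T,E\mid \hat{T})$ two ways using the chain rule: on one hand it equals $H(T\mid \hat{T}) + H(E\mid T,\hat{T}) = H(T\mid \hat{T})$ since $E$ is a deterministic function of $(T,\hat{T})$; on the other hand it equals $H(E\mid\hat{T}) + H(T\mid E,\hat{T})$. Bounding $H(E\mid\hat{T})\le H(E) \le 1$ (binary entropy) and splitting $H(T\mid E,\hat{T})$ according to the value of $E$ — the $E=0$ contribution vanishes since $T = \hat{T}$, and the $E=1$ contribution is at most $P_e \log(M-1) \le P_e \log M$ — yields the classical bound
\begin{align*}
H(T\mid\hat{T}) \;\le\; 1 + P_e \log M.
\end{align*}

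For $I(T;\hat{T})$, I would invoke the data processing inequality along the Markov chain $\theta \to X \to Y \to \hat{T}$. Since $T = T(\theta)$ is a function of $\theta$, we also have $T \to X \to Y \to \hat{T}$, and hence $I(T;\hat{T}) \le I(X;Y)$. Plugging these two bounds into the decomposition gives $\log M \le 1 + P_e \log M + I(X;Y)$, which rearranges to the claimed inequality
\begin{align*}
P_e \;\ge\; 1 - \frac{I(X;Y)+1}{\log M}.
\end{align*}

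The only mildly delicate step is the conditional-entropy expansion producing $H(T\mid\hat{T})\le 1+P_e\log M$, since one must be careful to distinguish $H(E\mid \hat{T})$ from $H(E)$ and to correctly handle the $E=0$ branch where $T$ is determined by $\hat{T}$. Everything else is a direct invocation of uniformity of $T$ and of the data processing inequality along the given Markov chain, so I do not anticipate any serious obstacle beyond this routine bookkeeping.
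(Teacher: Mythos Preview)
Your argument is correct and is exactly the standard derivation of Fano's inequality. Note that the paper does not give its own proof of this statement; it simply quotes it as \cite[Corollary~5.1]{polyanskiy2014lecture} in the preliminaries, so there is no paper proof to compare against.
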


\begin{theorem}[Chernoff Bound]
Let $X_1, \cdots, X_n$ be $\iid$ with $X_i\in \sth{0, 1}$ and $\prob{X_1 = 1} = p$. Set $X= \sum_{i=1}^n X_i$. Then 
\begin{itemize}
\item for any $t\in [0, 1-p]$, we have $\prob{X \ge \pth{p+t} n} \le \exp\pth{-n d(p+t \parallel p)}$. 
\item for any $t \in [0, p]$, we have $\prob{X \le \pth{p-t} n} \le \exp\pth{-n d(p-t \parallel p)}$. 
\end{itemize}
\end{theorem}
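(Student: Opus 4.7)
The plan is to establish the upper tail by the standard Chernoff/Cram\'er exponential-moment method, then deduce the lower tail by applying that same bound to the complementary variables $1-X_i$ and invoking a symmetry of the binary KL-divergence.

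For the upper tail, fix $\lambda > 0$. Markov's inequality applied to $\exp(\lambda X)$ gives $\prob{X \ge (p+t)n} \le \exp(-\lambda (p+t) n)\, \Expect[\exp(\lambda X)]$, and independence together with the Bernoulli assumption yields $\Expect[\exp(\lambda X)] = (1-p + p e^{\lambda})^n$. I would then optimize over $\lambda > 0$: differentiating the logarithm of the bound gives the first-order condition $p e^{\lambda}/(1-p+p e^{\lambda}) = p+t$, whose positive solution $e^{\lambda^*} = \frac{(p+t)(1-p)}{p(1-p-t)}$ exists precisely because $t\in[0, 1-p]$. Substituting $\lambda^*$ back, one finds $1-p+p e^{\lambda^*} = \frac{1-p}{1-p-t}$, and after collecting terms the exponent in the Markov bound becomes exactly $-n\bigl[(p+t)\log\tfrac{p+t}{p} + (1-p-t)\log\tfrac{1-p-t}{1-p}\bigr] = -n\, d(p+t\parallel p)$, which is the first claim.

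For the lower tail, set $Y_i := 1 - X_i$. The $Y_i$ are $\iid$ $\Bern(1-p)$ and $\sum_i Y_i = n - X$, so the event $\sth{X\le (p-t)n}$ coincides with $\sth{\sum_i Y_i \ge (1-p+t) n}$. Applying the upper-tail bound just proved with rate $1-p$ and shift $t$ gives $\exp(-n\, d(1-p+t \parallel 1-p))$. A direct check from the definition of binary KL in \eqref{eq: kl bernoulli} shows $d(1-q\parallel 1-q') = d(q\parallel q')$, so $d(1-p+t \parallel 1-p) = d(p-t \parallel p)$, and the second bound follows.

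The main obstacle is the algebraic simplification showing that the Legendre transform of the Bernoulli log moment generating function equals $d(\cdot \parallel p)$; otherwise every step is routine. A minor bookkeeping point is that the paper fixes $\log$ to base $2$ (so that $d$ is measured in bits), while the statement is written with $\exp$; either one reads $\exp$ as $2^{(\cdot)}$ throughout, or performs the optimization with natural logarithms and converts at the end. In either case the displayed bound is unchanged, since the Chernoff-Cram\'er argument is invariant under a uniform rescaling of $\log$ and its inverse.
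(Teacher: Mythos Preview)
Your argument is the standard Cram\'er--Chernoff derivation and is correct, including the symmetry reduction for the lower tail and the remark about the logarithm base. Note, however, that the paper does not actually prove this statement: it is listed in the preliminaries appendix as a known result (alongside Fano's inequality) and is simply invoked later, so there is no ``paper's own proof'' to compare against.
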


\section{Proof of Lemma \ref{lm: m Bernoulli kl}}
\label{app: proof of kl lemma}
\begin{proof}[Proof of Lemma \ref{lm: m Bernoulli kl}]
Lemma \ref{lm: m Bernoulli kl} follows easily from the independence between input spike trains and the assumption that the spikes in each input spike train are $\iid$. For completeness, we present the proof as follows. 

Recall 
that 
\begin{align*}
\bm{S} := \qth{\sth{S_t(u_1)}_{t=1}^{T}, \cdots, \sth{S_t(u_n)}_{t=1}^{T}}. 
\end{align*}

Let $\bm{s} = \qth{s_1, \cdots, s_n}$ such that each component $s_i$ is a binary sequence of length $T$, i.e., 
\[
s_i = \qth{b_1^i, \cdots, b_{T}^i} \in \sth{0, 1}^T. 
\]

For each $i=1, \cdots, n$, let $P_{\bm{S}}(\sth{S_t(u_i)}_{t=1}^{T})$ and $Q_{\bm{S}}(\sth{S_t(u_i)}_{t=1}^{T})$ be the marginal distributions of $\sth{S_t(u_i)}_{t=1}^{T}$ under joint distributions $P_{\bm{S}}$ and $Q_{\bm{S}}$ respectively. Similarly, $P_{\bm{S}}(S_t(u_i))$ and $Q_{\bm{S}}(S_t(u_i))$ are the corresponding two marginal distributions of $S_t(u_i)$. 
Thus, we have 
\begin{align*}
&D \pth{ P_{\bm{S}}(\sth{S_t(u_i)}_{t=1}^{T}) \parallel  Q_{\bm{S}}(\sth{S_t(u_i)}_{t=1}^{T})}\\
& \overset{(a)}{=} \sum_{\qth{b_1^i, \cdots, b_{T}^i}} 
P_{\bm{S}}(\sth{S_t(u_i)}_{t=1}^{T} = \qth{b_1^i, \cdots, b_{T}^i}) \log \frac{P_{\bm{S}}(\sth{S_t(u_i)}_{t=1}^{T} = \qth{b_1^i, \cdots, b_{T}^i})}{Q_{\bm{S}}(\sth{S_t(u_i)}_{t=1}^{T} = \qth{b_1^i, \cdots, b_{T}^i})}\\
& \overset{(b)}{=}\sum_{\qth{b_1^i, \cdots, b_{T}^i}} \pth{\prod_{t^{\prime}=0}^{T-1}P_{\bm{S}}(S_{t^{\prime}}(u_i) = b_{t^{\prime}}^i)} \log \frac{\prod_{t=1}^{T}P_{\bm{S}}(S_t(u_i) = b_t^i)}{\prod_{t=1}^{T}Q_{\bm{S}}(S_t(u_i) = b_t^i)}\\
& = \sum_{\qth{b_1^i, \cdots, b_{T}^i}} 
\pth{\prod_{t^{\prime}=0}^{T-1}P_{\bm{S}}(S_{t^{\prime}}(u_i) = b_{t^{\prime}}^i)}  \sum_{t=1}^{T} \log \frac{P_{\bm{S}}(S_t(u_i) = b_t^i)}{Q_{\bm{S}}(S_t(u_i) = b_t^i)}\\
& = \sum_{t=1}^{T} \sum_{\qth{b_1^i, \cdots, b_{T}^i}} \pth{\prod_{t^{\prime}=0}^{T-1}P_{\bm{S}}(S_{t^{\prime}}(u_i) = b_{t^{\prime}}^i)}  \log \frac{P_{\bm{S}}(S_t(u_i) = b_t^i)}{Q_{\bm{S}}(S_t(u_i) = b_t^i)}\\
& = \sum_{t=1}^{T} \sum_{\qth{b_1^i, \cdots, b_{T}^i}} \pth{\prod_{t^{\prime}=0 \& t^{\prime} \not=t}^{T-1}P_{\bm{S}}(S_{t^{\prime}}(u_i) = b_{t^{\prime}}^i)} P_{\bm{S}}(S_{t}(u_i) = b_{t})  \log \frac{P_{\bm{S}}(S_t(u_i) = b_t^i)}{Q_{\bm{S}}(S_t(u_i) = b_t^i)}\\
& \overset{(c)}{=} \sum_{t=1}^{T} \sum_{b_{t}^i} P_{\bm{S}}(S_t(u_i) = b_t^i)  \log \frac{P_{\bm{S}}(S_t(u_i) = b_t^i)}{Q_{\bm{S}}(S_t(u_i) = b_t^i)}\\
& = \sum_{t=1}^{T} \pth{p_i \log \frac{p_i}{q_i} + (1-p_i)\log \frac{1-p_i}{1-q_i}}\\
& = \sum_{t=1}^{T} d(p_i\parallel q_i)  = T \cdot d(p_i\parallel q_i). 
\end{align*}
where $\sum_{\qth{b_1^i, \cdots, b_{T}^i}}$ is the summation over all binary sequences of length $T$. In the last displayed equation, equality (a) follows from the definition of KL divergence; equality (b) is true because of independence of spikes; equality (c) follows from the fact that for any fixed $b_t^i$, 
\[
\sum_{\qth{b_1^i, \cdots, b_{T}^i}\setminus \{t\}} \pth{\prod_{t^{\prime}=1 \& t^{\prime} \not=t}^{T}P_{\bm{S}}(S_{t^{\prime}}(u_i) = b_{t^{\prime}}^i)} =1,
\] 
where we use $\sum_{\qth{b_1^i, \cdots, b_{T}^i}\setminus \{t\}}$ to denote the summation over all binary sequences of length $T$ with the $t$--th entry fixed.

Similarly, we get 
\begin{align*}
D(P_{\bm{S}}\parallel Q_{\bm{S}}) & = \sum_{\bm{s}= \qth{s_1, \cdots, s_n}} P_{\bm{S}}(\bm{S} = \bm{s}) \log \frac{P_{\bm{S}}(\bm{S} = \bm{s})}{Q_{\bm{S}}(\bm{S} = \bm{s})}\\
& = \sum_{i=1}^n D \pth{ P_{\bm{S}}(\sth{S_t(u_i)}_{t=1}^{T}) \parallel  Q_{\bm{S}}(\sth{S_t(u_i)}_{t=1}^{T})}\\
& = \sum_{i=1}^n T d(p_i\parallel q_i)  = T \sum_{i=1}^n d(p_i\parallel q_i),
\end{align*}
proving the lemma. 
\end{proof}

\section{Proof of Lemma \ref{lm: mutual info. k-WTA}}
\label{app:proof of mutual info}
\begin{proof}[Proof of Lemma \ref{lm: mutual info. k-WTA}]
Since mutual information can be viewed as distance to product distributions, by \cite[Theorem 3.4]{polyanskiy2014lecture},  we have 
\begin{align*}
I(X_{\bm{p}}; \bm{S}) ~ & = ~  \min_{Q_{X_{\bm{p}}} Q_{\bm{S}}} D \pth{ P_{X_{\bm{p}}, \bm{S}} \parallel Q_{X_{\bm{p}}} Q_{\bm{S}}}. 
\end{align*}
where $P_{X_{\bm{p}}, \bm{S}}$ is the joint distribution of $X_{\bm{p}}$ and $\bm{S}$, and $Q_{X_{\bm{p}}}$ and $Q_{\bm{S}}$ are any distributions of $X_{\bm{p}}$ and $\bm{S}$, respectively. 

For any fixed $Q_{\bm{S}}$, it holds that 
\begin{align*}
\min_{Q_{X_{\bm{p}}}} D \pth{ P_{X_{\bm{p}}, \bm{S}} \parallel Q_{X_{\bm{p}}} Q_{\bm{S}}}
& = \min_{Q_{X_{\bm{p}}}}  D \pth{ P_{\bm{S} \mid X_{\bm{p}}} P_{X_{\bm{p}}} \parallel Q_{X_{\bm{p}}} Q_{\bm{S}}}\\
& \le D \pth{ P_{\bm{S} \mid X_{\bm{p}}} P_{X_{\bm{p}}} \parallel P_{X_{\bm{p}}} Q_{\bm{S}}},
\end{align*}
where the equality follows from conditioning, and the inequality is true because the best choice over all $Q_{X_{\bm{p}}}$ cannot be worse than any specific choice of $Q_{X_{\bm{p}}}$. Here $\bm{S} \mid X_{\bm{p}}$ denotes the $n$ input spike trains conditioning on the choice of rate assignment. 

For any fixed $Q_{\bm{S}}$, we have 
\begin{align*}
&D \pth{P_{\bm{S} \mid X_{\bm{p}}} P_{X_{\bm{p}}} \parallel P_{X_{\bm{p}}} Q_{\bm{S}} }  
 =  P_{X_{\bm{p}}}(X_{\bm{p}}=\bm{p}^{0}) \sum_{\bm{s}} P_{\bm{S} \mid X_{\bm{p}} = \bm{p}^{0}} (\bm{S} =\bm{s}) \qth{\log \frac{P_{\bm{S} \mid X_{\bm{p}} = \bm{p}^{0}} (\bm{S} =\bm{s}) P_{X_{\bm{p}}}(X_{\bm{p}}=\bm{p}^{0})}{Q_{\bm{S}}(\bm{S}=\bm{s})P_{X_{\bm{p}}}(X_{\bm{p}}=\bm{p}^{0})}} \\
& \qquad \qquad \qquad  + \sum_{i=1}^k \sum_{j=k+1}^n P_{X_{\bm{p}}}(X_{\bm{p}}=\bm{p}^{ij}) \sum_{\bm{s}} P_{\bm{S} \mid X_{\bm{p}} = \bm{p}^{ij}} (\bm{S} =\bm{s}) \qth{\log \frac{P_{\bm{S} \mid X_{\bm{p}} = \bm{p}^{ij}} (\bm{S} =\bm{s}) P_{X_{\bm{p}}}(X_{\bm{p}}=\bm{p}^{ij})}{Q_{\bm{S}}(\bm{S}=\bm{s})P_{X_{\bm{p}}}(X_{\bm{p}}=\bm{p}^{ij})}}\\
&  \qquad \qquad \qquad  \qquad \qquad ~ ~ = \frac{1}{k(n-k)+1} \sum_{\bm{s}} P_{\bm{S} \mid X_{\bm{p}} = \bm{p}^{0}} (\bm{S} =\bm{s}) \qth{\log \frac{P_{\bm{S} \mid X_{\bm{p}} = \bm{p}^{0}} (\bm{S} =\bm{s})}{Q_{\bm{S}}(\bm{S}=\bm{s})}} \\
& \qquad \qquad \qquad \qquad  \qquad  \qquad \quad  +  \frac{1}{k(n-k)+1}\sum_{i=1}^k \sum_{j=k+1}^n \sum_{\bm{s}} P_{\bm{S} \mid X_{\bm{p}} = \bm{p}^{ij}} (\bm{S} =\bm{s}) \qth{\log \frac{P_{\bm{S} \mid X_{\bm{p}} = \bm{p}^{ij}} (\bm{S} =\bm{s})}{Q_{\bm{S}}(\bm{S}=\bm{s})}}\\
&  \qquad \qquad \qquad  \qquad \qquad ~ ~ =  \frac{1}{k(n-k)+1} D \pth{P_{\bm{S} \mid X_{\bm{p}} = \bm{p}^{0}}\parallel Q_{\bm{S}}} 
+ \frac{1}{k(n-k)+1} \sum_{i=1}^k \sum_{j=k+1}^n D \pth{P_{\bm{S} \mid X_{\bm{p}} = \bm{p}^{ij}}\parallel Q_{\bm{S}}}, 
\end{align*}
where $\sum_{\bm{s}}$ is summation over all possible $n$ binary sequences of length $T$. 
Here $P_{\bm{S} \mid X_{\bm{p}} = \bm{p}^{0}}$ is the distribution of $\bm{S}$ with the rate assignment $\bm{p}^{0}$, and 
$P_{\bm{S} \mid X_{\bm{p}} = \bm{p}^{ij}}$ is the distribution of $\bm{S}$ with the rate assignment $\bm{p}^{ij}$.
Choosing $Q_{\bm{S}}$ to be the distribution of $\bm{S}$ with rate assignment $\bm{p}^0$ defined in \eqref{def: rate 0}, then for any $i=1, \cdots, k$ and $j=k+1, \cdots, n$, we have  
\[
D\pth{P_{\bm{S}\mid X_{\bm{p}^{ij}}} \parallel Q_{\bm{S}}} = T(d(r_2\parallel r_1) + d(r_1\parallel r_2)). 
\]
Therefore,
\begin{align*}
I\pth{X_{\bm{p}}\parallel \bm{S}} & \le \frac{1}{k(n-k)+1} \sum_{i=1}^{k} \sum_{j=k+1}^n T(d(r_2\parallel r_1) + d(r_1\parallel r_2)) \\
& \le T(d(r_2\parallel r_1) + d(r_1\parallel r_2)). 
\end{align*}

\end{proof}

\section{Proof of Lemma \ref{lm: quiescent}}
\label{app: lm: quiescent}

By the activation rules in Algorithm \ref{alg: k WTA}, we know that 
\begin{align*}
S_{t_0+m} = 
\begin{cases}
1, ~ \text{if ~ } (b-1)\indc{S_{t_0+m-1}(v_i)=1} + \sum_{r=1}^m \pth{\indc{V_{t_0+m-r} >0} -m\indc{V_{t_0+m-r} \le -1}}>b; \\
0, ~ \text{otherwise.}
\end{cases}
\end{align*}
As all input neurons are quiescent at time $t_0$ and remain to be quiescent for all $t\ge t_0$, it follows that 
\begin{align*}
&(b-1)\indc{S_{t_0+m-1}(v_i)=1} + \sum_{r=1}^m \pth{\indc{V_{t_0+m-r} >0} -m\indc{V_{t_0+m-r} \le -1}}\\
& = (b-1)\indc{S_{t_0+m-1}(v_i)=1} - m \sum_{r=1}^m\indc{V_{t_0+m-r} \le -1}\\
& \le b-1 < b. 
\end{align*}
Thus, $S_{t_0+m}(v_i)=0$ for all $i=1, \cdots, n$. So we have $V_{t_0+m+1}(v_i)=0$ for all $i=1, \cdots, n$, which again implies that $S_{t_0+m+1}(v_i)=0$ for all $i=1, \cdots, n$.  Therefore, we conclude that $S_{t}(v_i)=0$ and $V_t(v_i)=0$ for all $t> t_0+m$.






\section{Proof of Theorem \ref{thm: observation lower bound k_WTA}}
\label{app: proof of theorem lower bound}
\begin{proof}[Proof of Theorem \ref{thm: observation lower bound k_WTA}]
We prove this via a genie-aided argument \cite{jacobs1967lower} by assuming that there is a genie that can access the firing sequences of all the $n$ input neurons. By assuming the existence of a genie, we are essentially considering the centralized setting. 
Clearly, if the error probability is high even in the centralized setting, then no SNNs (which are distributed algorithms) can achieve lower error probability.

Suppose that $T\le \pth{(1-\delta) \log (k(n-k) +1)-1 }T_{\calR}$.  
By \eqref{eq: set R} there exists $r_1, r_2$ such that $r_1\not=r_2$ and 
\begin{align*}
T\le \pth{(1-\delta) \log (k(n-k) +1)-1 }\frac{1}{d(r_2 \parallel r_1) + d(r_1 \parallel r_2)}. 
\end{align*}
Without loss of generality, assume that $r_1 > r_2$. 

Consider the $k(n-k)+1$ possible rate assignments defined in Lemma \ref{lm: mutual info. k-WTA}. Let $\calP$ be the set of such rate assignments. 
By Yao's minimax principle, we know the minimax probability of error is always lower bounded by Bayes probability of error with any prior distribution: 
\[
\max_{\bm{p}\in \calA\calR_k} \prob{\hat{\bm{win}}\pth{\bm{S}} \not= \calW(\bm{p})}  \ge  
\mathbb{E}_{X_{\bm{p}}\sim Unif (\calP) } \qth{\prob{\hat{\bm{win}}\pth{\bm{S}} \not = \calW (X_{\bm{p}})}}, 
\]
where $X_{\bm{p}}\sim Unif (\calP)$ is uniformly distributed over set $\calP$. 
In addition, by Fano's inequality, we have 
\begin{align}
\label{eq: fano k-WTA}
\mathbb{E}_{X_{\bm{p}}\sim Unif (\calP) } \qth{\prob{\hat{\bm{win}}\pth{\bm{S}} \not = \calW(X_{\bm{p}})}}  \ge 1 - \frac{I(X_{\bm{p}}; \bm{S}) + 1}{\log (k(n-k) +1)}. 
\end{align}
Applying Lemma \ref{lm: mutual info. k-WTA}, we get 
\begin{align*}
\max_{\bm{p}\in \calA\calR_k} \prob{\hat{\bm{win}}\pth{\bm{S}} \not= \calW(\bm{p})} 
& \ge 1 - \frac{I(X_{\bm{p}}; \bm{S}) + 1}{\log (k(n-k) +1)}\\
& \ge 1- \frac{T\pth{d(r_2\parallel r_1) + d(r_1\parallel r_2)}  + 1}{\log (k(n-k) +1)}\\
& \ge \delta.
\end{align*}
The last inequality holds as $T \le \pth{(1-\delta) \log (k(n-k) +1)-1 }T_{\calR}$. 
\end{proof}

\section{Proof of Theorem \ref{thm: k wta threhold real}}
\label{app: proof of algorithm}
The proof of Theorem \ref{thm: k wta threhold real} uses the following technical fact and lemma. 
\begin{fact}
\label{fact: monotone}
For any given $p\in (0, 1)$ and $b>0$, let $f_{p, b}: \reals \to \reals$, defined as: for all $t>0$,  
\begin{align*}
f_{p, b}(t) : = \exp\pth{-t d\pth{\frac{b}{t} \parallel p}}.  
\end{align*}
Function $f_{p, b}(\cdot) $ is increasing when $t\in (0, \frac{b}{p})$ and decreasing when $t\ge \frac{b}{p}$.
\end{fact}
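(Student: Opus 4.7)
The plan is to reduce the monotonicity of $f_{p,b}$ to the sign of the derivative of its logarithm, since $\exp(\cdot)$ is strictly increasing. Define
\[
g(t) \;:=\; \log f_{p,b}(t) \;=\; -t\, d\!\pth{\tfrac{b}{t}\,\Big\|\,p},
\]
which is well-defined for $t>b$ (so that $b/t\in(0,1)$); for $t\le b$ the argument $b/t\notin(0,1)$ and the expression should be interpreted as $-\infty$, so $f_{p,b}\equiv 0$ there, consistent with the Chernoff interpretation of $f_{p,b}(t)$ as a tail probability for a sum of $t$ Bernoulli$(p)$ variables exceeding $b$. Monotonicity of $f_{p,b}$ on $(b,\infty)$ will then follow from monotonicity of $g$, and the boundary behavior at $t=b$ is handled by continuity.

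Next I would plug in the Bernoulli KL expression \eqref{eq: kl bernoulli} with $r=b/t$, $r'=p$, and simplify. After multiplying through by $t$, the factor $t$ distributes cleanly through both summands, and one obtains
\[
-\,g(t) \;=\; b\log\!\pth{\tfrac{b}{tp}} + (t-b)\log\!\pth{\tfrac{t-b}{t(1-p)}},
\]
so that
\[
g(t) \;=\; b\log\!\pth{\tfrac{tp}{b}} + (t-b)\log\!\pth{\tfrac{t(1-p)}{t-b}}.
\]
Differentiating term by term, several contributions cancel (the $b/t$ from the first term cancels the $-b/t$ contribution of $(t-b)/t$ from the derivative of the second summand, and the constant $(t-b)/(t-b)=1$ cancels a similar $+1$). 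After the dust settles I expect
\[
g'(t) \;=\; \log\!\pth{\tfrac{t(1-p)}{t-b}}.
\]

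From this closed form the conclusion is immediate: $g'(t)>0$ iff $t(1-p)>t-b$, i.e.\ iff $tp<b$, i.e.\ iff $t<b/p$; and similarly $g'(t)<0$ iff $t>b/p$. Since $\exp$ is strictly increasing, $f_{p,b}$ inherits this monotonicity structure on $(b,\infty)$, and $b/p>b$ (because $p<1$) so the critical point is indeed interior to the domain of definition. The only real work is the bookkeeping in the derivative computation; no subtle obstacle is anticipated, as the identity $g'(t)=\log\!\bigl(t(1-p)/(t-b)\bigr)$ makes both the location of the maximum at $t=b/p$ and the sign of $g'$ on either side transparent.
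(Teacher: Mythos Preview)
Your proposal is correct: the derivative computation yields $g'(t)=\log\bigl(t(1-p)/(t-b)\bigr)$ exactly as you write, and the sign analysis is immediate. The paper itself gives no proof beyond ``follows immediately from a simple algebra,'' so your argument is precisely the intended one, with the added merit that you explicitly address the domain issue for $t\le b$ that the paper's statement glosses over.
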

This fact follows immediately from a simple algebra. 
\begin{lemma}
\label{lm: bernoulli int}
Assume $u, v\in [c, C] \subseteq (0,1)$. Then for any $\alpha \in (0, 1)$, 
\begin{align*}
d\pth{(1-\alpha)u+\alpha v \parallel u } \ge \frac{\alpha^2 c(1-C)}{2C(1-c)} \,\pth{d\pth{u\parallel v} + d\pth{v\parallel u}}.  
\end{align*}
\end{lemma}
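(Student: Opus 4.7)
The plan is to reduce the inequality to a second-order Taylor expansion of the map $p \mapsto d(p \parallel q)$ at $p = q$, and then sandwich the resulting second derivative uniformly on $[c, C]$. Direct differentiation gives
\[
\frac{d}{dp} d(p \parallel q) \;=\; \log \frac{p(1-q)}{q(1-p)}, \qquad \frac{d^2}{dp^2} d(p \parallel q) \;=\; \frac{1}{p(1-p)},
\]
so that both $d(\cdot \parallel q)$ and its first derivative vanish at $p = q$. Taylor's theorem with remainder then produces, for any $p, q \in [c, C]$, some $\xi$ between $p$ and $q$ with
\[
d(p \parallel q) \;=\; \frac{(p - q)^2}{2\, \xi (1 - \xi)}.
\]

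Next I would control $\xi(1-\xi)$ uniformly on $[c, C]$. From $\xi \ge c$ and $1 - \xi \ge 1 - C$ one gets $\xi(1-\xi) \ge c(1-C)$; from $\xi \le C$ and $1 - \xi \le 1 - c$ one gets $\xi(1-\xi) \le C(1-c)$. Feeding these into the Taylor identity yields the two-sided bound
\[
\frac{(p - q)^2}{2\, C (1 - c)} \;\le\; d(p \parallel q) \;\le\; \frac{(p - q)^2}{2\, c (1 - C)},
\]
valid for all $p, q \in [c, C]$.

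Finally I would apply this sandwich to each of the three divergences in the statement. Because $(1-\alpha)u + \alpha v$ is a convex combination of points in $[c, C]$, it lies in $[c, C]$; using the lower estimate with $p = (1-\alpha)u + \alpha v$ and $q = u$, and noting $p - q = \alpha(v - u)$, gives
\[
d\pth{(1-\alpha) u + \alpha v \parallel u} \;\ge\; \frac{\alpha^2 (u - v)^2}{2\, C (1 - c)}.
\]
The upper estimate applied to both $d(u \parallel v)$ and $d(v \parallel u)$ yields
\[
d(u \parallel v) + d(v \parallel u) \;\le\; \frac{(u - v)^2}{c (1 - C)}.
\]
Taking the ratio of these two displays cancels the common factor $(u-v)^2$ and produces exactly the multiplicative constant $\tfrac{\alpha^2 c(1-C)}{2C(1-c)}$ asserted in the lemma.

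I do not anticipate any serious obstacle: the only point that needs verification is that the Taylor intermediate point in each application really does lie in $[c, C]$, but this is automatic because in each case both endpoints of the expansion lie in $[c, C]$ and the interval is convex. The argument is entirely local and exploits the Bernoulli structure only through the elementary formula for the second derivative of $d(\cdot \parallel q)$; the mildly crude estimates $\xi(1-\xi) \ge c(1-C)$ and $\xi(1-\xi) \le C(1-c)$ are what produce the asymmetric constants $c(1-C)$ and $C(1-c)$ appearing in the final ratio.
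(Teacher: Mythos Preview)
Your proposal is correct and follows essentially the same approach as the paper: both arguments Taylor-expand $d(\cdot\parallel q)$ to second order, use $d''(x\parallel q)=\tfrac{1}{x(1-x)}$ together with the uniform bound $\xi(1-\xi)\le C(1-c)$ to get the lower estimate $d((1-\alpha)u+\alpha v\parallel u)\ge \tfrac{\alpha^2(u-v)^2}{2C(1-c)}$, and then bound $d(u\parallel v)+d(v\parallel u)\le \tfrac{(u-v)^2}{c(1-C)}$. The only cosmetic difference is that the paper obtains this last inequality from the closed form $(u-v)\log\tfrac{u(1-v)}{v(1-u)}$ and $\log(1+x)\le x$, whereas you re-use the Taylor upper bound twice; both yield the identical constant.
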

\begin{proof}
Note that for any fixed $q\in [c, C]$, $d\pth{x \parallel q}$ is a function of $x$, where $x\in [c, C]$. 
In addition, by simple algebra, we have 
\begin{align}
\label{eq: kl derivative}
d^{\prime}\pth{x \parallel q}  = \log \frac{(1-q)x}{q(1-x)}, ~ \text{and} ~ d^{\prime \prime}\pth{x \parallel q}  = \frac{1}{x(1-x)}. 
\end{align}

By Taylor expansion, we have 
\begin{align*}
d\pth{(1-\alpha)u+\alpha v \parallel u } & = d\pth{u \parallel u } + \pth{(1-\alpha)u+\alpha v - u} d^{\prime}\pth{u \parallel u } \\
& \quad +  \frac{\pth{(1-\alpha)u+\alpha v - u}^2}{2} d^{\prime\prime}\pth{\xi \parallel u}, 
\end{align*}
where $\xi \in \qth{\min \{u, (1-\alpha)u+\alpha v\}, ~ \max \{u, (1-\alpha)u+\alpha v\}}$. By \eqref{eq: kl derivative}, 
\begin{align*}
d\pth{(1-\alpha)u+\alpha v \parallel u } & = 0+ 0 + \frac{1}{\xi(1-\xi)} \frac{\alpha^2(u-v)^2}{2} \ge  \frac{\alpha^2(u-v)^2}{2C(1-c)} . 
\end{align*}

On the other hand, since $d\pth{u\parallel v} + d\pth{u\parallel v}$ is symmetric in $u$ and $v$, without loss of generality, assume that $u\ge v$. 
We have  
\begin{align*}
d\pth{u\parallel v} + d\pth{u\parallel v}& = (u-v) \log \frac{u(1-v)}{v(1-u)}\\
&= (u-v) \log \pth{1+\frac{u-v}{v(1-u)}}\\
& \le (u-v) \frac{u-v}{v(1-u)} = \frac{(u-v)^2}{v(1-u)} \le  \frac{(u-v)^2}{c(1-C)}\\
& \le \frac{2C(1-c)}{c(1-C)\alpha^2}  d\pth{(1-\alpha)u+\alpha v \parallel u }, 
\end{align*}
proving the lemma. 
\end{proof}

Now we are ready to prove Theorem \ref{thm: k wta threhold real}. 
\begin{proof}[{\bf Proof of Theorem \ref{thm: k wta threhold real}}]
Without loss of generality, assume that 
\[
p_1 \ge \cdots \ge p_k > p_{k+1} \ge \cdots \ge p_{n}. 
\]

For a given rate assignment $\bm{p}\in \calA \calR$,  
define $\tau_1, \tau_2, \cdots, \tau_n$ as 
\begin{align*}
\tau_i :=   \inf_{t}\sth{t: ~ \sum_{r=0}^{\min\{t, m^*\} } S_r(u_i) \ge b},  ~~ \forall ~ i =1, \cdots, n. 
\end{align*}
To show Theorem \ref{thm: k wta threhold real}, it is enough to show that with probability $1-\delta$, 
\begin{align}
\tau_i &< \tau_j ~~ \forall ~ i=1, \cdots, k, ~ \text{and}~ j=k+1, \cdots, n; \label{eq: true}\\
\text{and} ~ ~  
\tau_i &\le m^* ~~ \forall ~ i=1, \cdots, k. \label{eq: hitting}. 
\end{align}

Before diving into proving \eqref{eq: true} and \eqref{eq: hitting} hold with probability at least $1-\delta$, let's check the sufficiency of \eqref{eq: true} and \eqref{eq: hitting}. 
Let $t_0:= \max_{1\le i \le k} \tau_i$.
Let $\calE$ be the event on which \eqref{eq: true} and \eqref{eq: hitting} hold. 
Clearly, conditioning on event $\calE$, we have  
\[
\pth{\max_{1\le i \le k} \tau_i ~\mid \calE} = ~ t_0 \mid \calE \le m^*-1 \le m-1,  
\]
and  
\[
\pth{\max_{1\le i \le k} \tau_i ~ \mid ~ \calE} = t_0 \mid \calE ~ < ~ \tau_j \mid \calE ~ ~ \forall j=k+1, \cdots, n.  
\]

Notably, for any $t\le t_0 \le m -1$ and for $i =1, \cdots, n$, 
\begin{align*}
\qth{\sum_{r=1}^{t} \indc{V_{r}(v_i) >0} - m \sum_{r=1}^{t} \indc{V_{r}(v_i) \le -1}}_+ 
& \le \sum_{r=1}^{t} \indc{V_{r}(v_i) >0}  \le \sum_{r=1}^{t} S_r(u_i). 
\end{align*}
Thus, conditioning on $\calE$, at most $k-1$ output neurons ever spike by time $t_0$. So we have (1) $\indc{V_t(v_i) \le -1} =0$, and (2) $\indc{V_t(v_i) >0}=S_t(u_i)$, for all $i=1, \cdots, n$ and for all $t\le t_0$. 
In addition, we have for all $t\le t_0$, 
\begin{align*}
&(b-1) \indc{S_{t}(v_i)=1} + \qth{\sum_{r=1}^{t} \indc{V_r(v_i)>0} - m\sum_{r=1}^{t} \indc{V_r(v_i)\le -1}}_+  \\
&= (b-1) \indc{S_{t}(v_i)=1} + \sum_{r=1}^{t} \indc{V_r(v_i)>0} \\
&= (b-1) \indc{S_{t}(v_i)=1} + \sum_{r=1}^{t} S_r(u_i). 
\end{align*}
By the activation rules in Algorithm \ref{alg: k WTA}, we know, conditioning on $\calE$, at time $t_0+1\le m^*$, output neurons $v_1, \cdots, v_k$ spike simultaneously, and output neurons $v_{k+1}, \cdots, v_n$ do not spike, proving (1) in Theorem \ref{thm: k wta threhold real}. 
By the choice of $t_0$, we know that, on $\calE$, $t_0+1$ is the first time that $k$ output neurons spike simultaneously, and no other $k$ output neurons ever spike simultaneously, proving (2) in Theorem \ref{thm: k wta threhold real}. 

By a simple induction argument, it can be shown that conditioning on $\calE$, in each of the time slot $t$ such that $t_0+1 \le t \le m+1$, output neurons $v_1, \cdots, v_k$ spike, and no other output neurons (i.e., output neurons $v_{k+1}, \cdots, v_n$ do not spike). 
Let's consider the case when $t=(m+1)+1$. As among output neurons, only $v_1, \cdots, v_k$ spike, and no other output neurons spike for any $t^{\prime} \le m+1$, it follows that 
\[
m\sum_{r=1}^{m} \indc{V_{t-r}(v_i)\le -1} =0, ~~ \forall ~  v_1, \cdots, v_k. 
\] 
Thus, for these $k$ output neurons, 
\begin{align*}
&(b-1) \indc{S_{t-1}(v_i)=1} + \qth{\sum_{r=1}^{m} \indc{V_{t-r}(v_i)>0} - m\sum_{r=1}^{m} \indc{V_{t-r}(v_i)\le -1}}_+ \\
& = (b-1) + \sum_{r=1}^{m} \indc{V_{t-r}(v_i)>0} \\
& =  (b-1) + \sum_{r=1}^{m} \indc{V_{t-1-r}(v_i)>0}  + \indc{V_{t-1}(v_i)>0} - \indc{V_{t-1-m}(v_i)>0}\\
& \ge b-2  + \sum_{r=1}^{m} \indc{V_{t-1-r}(v_i)>0} \\
& = b-2 + \sum_{r=1}^m S_{r}(u_i) \ge 2b-2 \ge b,
\end{align*}
where the last inequality holds as long as $b\ge 2$. For output neurons $v_{k+1}, \cdots, v_n$, we have \begin{align*}
&(b-1) \indc{S_{t-1}(v_i)=1} + \qth{\sum_{r=1}^{m} \indc{V_{t-r}(v_i)>0} - m\sum_{r=1}^{m} \indc{V_{t-r}(v_i)\le -1}}_+ \\
& \le \sum_{r=1}^{m} \indc{V_{t-r}(v_i)>0} \\
& \overset{(a)}{=} \sum_{r=1}^{m} \indc{V_{t-1-r}(v_i)>0} + \indc{V_{t-1}(v_i)>0} - \indc{V_{t-1-m}(v_i)>0}\\
& = \sum_{r=1}^{m} \indc{V_{t-1-r}(v_i)>0} - \indc{V_{t-1-m}(v_i)>0} \\
& \le \sum_{r=1}^{m} \indc{V_{t-1-r}(v_i)>0}
=\sum_{r=1}^{m} \indc{V_{r}(v_i)>0} < b. 
\end{align*}
Equality (a) follows because at time $t-1$, output neurons $v_1, \cdots, v_k$ spike, resulting in $\indc{V_{t-1-r}(v_i)>0}=0$ for $i\not=1, \cdots, k$.  
Thus, we know conditioning on event $\calE$, at time $(m+1)+1$, the output neurons $v_1, \cdots, v_k$ spike, and no other output neuron spike. It can be shown by a simple induction that at each time $t$ such that $t_0+1 \le t \le m+b$, the output neurons $v_1, \cdots, v_k$ spike, and no other output neurons spike. This proves (3) in Theorem \ref{thm: k wta threhold real}. 

\vskip \baselineskip 

Next we prove \eqref{eq: true} and \eqref{eq: hitting}. 
By definition of $\tau_j$, we know that $\tau_j \le m^*$ for all $j=1, \cdots, n$. Thus, we only need to show that 
with probability $1-\delta$, 
\[
\tau_i < \tau_j ~~ \forall ~ i=1, \cdots, k, ~ \text{and}~ j=k+1, \cdots, n, 
\]
which is the focus of the remainder of our proof.

\vskip \baselineskip

Note that 
\begin{align}
\label{eq: k-wta prob error}
\nonumber
& \prob{\tau_i < \tau_j, ~~\forall i\in \{1, \cdots, k\}, \forall j\in \{k+1, \cdots, n\}} \\
\nonumber
& = \prob{\tau_i < \tau_j, \& ~ \tau_i < m^*, ~~\forall i\in \{1, \cdots, k\},  \forall j\in \{k+1, \cdots, n\}} \\
& \ge 1- \sum_{i=1}^k \sum_{j=k+1}^n \prob{\tau_i \ge \tau_j, ~ \text{or}~ \tau_i=m^*}. 
\end{align}
For each term in the summation of \eqref{eq: k-wta prob error}, we have 
\begin{align}
\label{eq: k-wta whole term error}
\prob{\tau_i \ge \tau_j, ~ \text{or}~ \tau_i=m^*} = \prob{\tau_i=m^*} + \prob{\tau_i \ge \tau_j, \& ~ \tau_i<m^*},
\end{align}
which follows from the fact that $\prob{A\cup B} = \prob{A} + \prob{B-A}$ for any sets $A$ and $B$. 
Note that $m^*p_i \ge b$. By Chernoff bound, the first term in \eqref{eq: k-wta whole term error} is bounded as
\begin{align}
\label{eq: k-wta error probability}
\prob{\tau_i =m^*} = \prob{\sum_{r=0}^{m^*}S_r(u_i)\le b} \le \exp\pth{-m^* \cdot d\pth{\frac{b}{m^*}\parallel p_i}}. 
\end{align}
For the second term in \eqref{eq: k-wta whole term error}, we have 
\begin{align*}
\prob{\tau_i\ge \tau_j ~~\text{and}~ \tau_i<m^*} & = \prob{\sum_{r=0}^{\tau_i} S_r(u_j) \ge b, ~~\text{and} ~ \tau_i <m^*} \\
& \le \exp\pth{-t^* \cdot d\pth{\frac{b}{t^*} \parallel p_{k+1}}} + \exp\pth{-t^* \cdot d\pth{\frac{b}{t^*} \parallel p_{k}}}, 
\end{align*}
where $t^* \in \pth{\frac{b}{p_{k+1}}, \frac{b}{p_k}}$. 
Thus, \eqref{eq: k-wta whole term error} is upper bounded as 
\begin{align*}
\prob{\tau_i \ge \tau_j, ~ \text{or}~ \tau_i=m^*} & \le \exp\pth{-m^* \cdot d\pth{\frac{b}{m^*} \parallel p_{k+1}}} \\
& \quad + \exp\pth{-t^* \cdot d\pth{\frac{b}{t^*} \parallel p_{k+1}}} + \exp\pth{-t^* \cdot d\pth{\frac{b}{t^*} \parallel p_{k}}}\\
& \le \exp\pth{-t^* \cdot d\pth{\frac{b}{t^*} \parallel p_{k+1}}} + 2\exp\pth{-t^* \cdot d\pth{\frac{b}{t^*} \parallel p_{k}}}. 
\end{align*}
Eq \eqref{eq: k-wta prob error} is bounded as 
\begin{align*}
& \prob{\tau_i < \tau_j, ~~\forall i\in \{1, \cdots, k\}, \forall j\in \{k+1, \cdots, n\}}\\
& \ge 1- \sum_{i=1}^k \sum_{j=k+1}^n \prob{\tau_i \ge \tau_j, ~ \text{or}~ \tau_i=m^*} \\
& \ge 1-\sum_{i=1}^k \sum_{j=k+1}^n \pth{\exp\pth{-t^* \cdot d\pth{\frac{b}{t^*} \parallel p_{k+1}}} + 2\exp\pth{-t^* \cdot d\pth{\frac{b}{t^*} \parallel p_{k}}}}\\
& = 1-k(n-k) \pth{\exp\pth{-t^* \cdot d\pth{\frac{b}{t^*} \parallel p_{k+1}}} + 2\exp\pth{-t^* \cdot d\pth{\frac{b}{t^*} \parallel p_{k}}}}. 
\end{align*}

Let $t^* = \frac{b}{(p_k+p_{k+1})/2}$, it holds that 
\begin{align*}
\exp\pth{-t^* \cdot d\pth{\frac{b}{t^*} \parallel p_{k+1}}} & = \exp\pth{- \frac{b}{(p_k+p_{k+1})/2} \cdot d\pth{\frac{p_k+p_{k+1}}{2} \parallel p_{k+1}}},\\
2\exp\pth{-t^* \cdot d\pth{\frac{b}{t^*} \parallel p_{k}}} & = 2\exp\pth{- \frac{b}{(p_k+p_{k+1})/2} \cdot d\pth{\frac{p_k+p_{k+1}}{2} \parallel p_{k}}}. 
\end{align*}
By Lemma \ref{lm: bernoulli int}, we know 
\begin{align*}
d\pth{\frac{p_k+p_{k+1}}{2} \parallel p_{k+1}} \ge \frac{c(1-C)}{8C(1-c)} \pth{d(p_{k+1} \parallel p_k) + d(p_k \parallel p_{k+1})},  
\end{align*}
and, 
\[
d\pth{\frac{p_k+p_{k+1}}{2} \parallel p_{k}} \ge  \frac{c(1-C)}{8C(1-c)} \pth{d(p_{k+1} \parallel p_k) + d(p_k \parallel p_{k+1})}. 
\]
Thus, we get 
\begin{align*}
&\prob{\tau_i < \tau_j, ~~\forall i\in \{1, \cdots, k\}, \forall j\in \{k+1, \cdots, n\}} \\
& \ge 1-3k(n-k)\exp\pth{-\frac{2b}{p_k+p_{k+1}} \frac{c(1-C)}{8C(1-c)}\pth{d(p_k\parallel p_{k+1}) +d(p_{k+1}\parallel p_{k})}} 
\end{align*}
Since $b= \frac{8C^2(1-c)}{c(1-C)}\pth{\log \frac{3}{\delta} + \log k(n-k)} T_{\calR}$, we have 
\[
3k(n-k)\exp\pth{-\frac{2b}{p_k+p_{k+1}} \frac{c(1-C)}{8C(1-c)}\pth{d(p_k\parallel p_{k+1}) +d(p_{k+1}\parallel p_{k})}} \le \delta.
\]
Thus, $ \prob{\tau_i < \tau_j, ~~\forall i\in \{1, \cdots, k\}, \forall j\in \{k+1, \cdots, n\}} \le 1-\delta$.

In addition, 
\begin{align*}
t^* = \frac{2b}{p_{k}+p_{k+1}} \le \frac{1}{c}b = m^* \le m, 
\end{align*}
completing the proof of Theorem \ref{thm: k wta threhold real}. 
\end{proof}

%
%

\end{document}